\newtheorem{theorem}{Theorem}[section]
\newtheorem{lemma}{Lemma}[section]
\newtheorem{prop}{Proposition}[section]
\theoremstyle{definition}
\newtheorem{example}{Example}[section]
\newcommand{\tinyspace}{\mspace{1mu}}
\newcommand{\microspace}{\mspace{0.5mu}}
\newcommand{\op}[1]{\operatorname{#1}}
\newcommand{\tr}{\operatorname{Tr}}
\renewcommand{\int}{\operatorname{int}}
\newcommand{\reg}[1]{\mathsf{#1}}
\newcommand{\abs}[1]{\left\lvert #1 \right\rvert}
\newcommand{\bigabs}[1]{\bigl\lvert #1 \bigr\rvert}
\newcommand{\Bigabs}[1]{\Bigl\lvert #1 \Bigr\rvert}
\newcommand{\ip}[2]{\langle #1 , #2\rangle}
\newcommand{\bigip}[2]{\bigl\langle #1, #2 \bigr\rangle}
\newcommand{\Bigip}[2]{\Bigl\langle #1, #2 \Bigr\rangle}
\newcommand{\norm}[1]{\left\lVert\tinyspace #1 \tinyspace\right\rVert}
\newcommand{\bignorm}[1]{\bigl\lVert\tinyspace #1 \tinyspace\bigr\rVert}
\newcommand{\Bignorm}[1]{\Bigl\lVert\tinyspace #1 \tinyspace\Bigr\rVert}
\newcommand{\biggnorm}[1]{\biggl\lVert\tinyspace #1 \tinyspace\biggr\rVert}
\newcommand{\Biggnorm}[1]{\Biggl\lVert\tinyspace #1 \tinyspace\Biggr\rVert}
\newcommand{\ket}[1]{
  \lvert\microspace #1 \microspace \rangle}
\newcommand{\bra}[1]{
  \langle\microspace #1 \microspace \rvert}
\def\I{\mathbb{1}}
\def\BB84{\mathsf{BB84}}
\newcommand{\setft}[1]{\mathrm{#1}}
\newcommand{\Density}{\setft{D}}
\newcommand{\Pos}{\setft{Pos}}
\newcommand{\Herm}{\setft{Herm}}
\newcommand{\Lin}{\setft{L}}
\providecommand*{\cupdot}{%
  \mathbin{%
    \mathpalette\@cupdot{}%
  }%
}
\newcommand*{\@cupdot}[2]{%
  \ooalign{%
    $\m@th#1\cup$\cr
    \hidewidth$\m@th#1\cdot$\hidewidth
  }%
}
\def\complex{\mathbb{C}}
\newenvironment{mylist}[1]{\begin{list}{}{
	\setlength{\leftmargin}{#1}
	\setlength{\rightmargin}{0mm}
	\setlength{\labelsep}{2mm}
	\setlength{\labelwidth}{8mm}
	\setlength{\itemsep}{0mm}}}
	{\end{list}}
\newcommand\A{\mathcal{A}}
\newcommand\B{\mathcal{B}}
\newcommand\R{\mathcal{R}}
\renewcommand\S{\mathcal{S}}
\renewcommand\H{\mathcal{H}}
\begin{document}

\title{\Large\bf Extended nonlocal games and monogamy-of-entanglement games}

\author[1,2]{Nathaniel Johnston}
\author[3]{Rajat Mittal}
\author[4]{Vincent Russo}
\author[4,5]{John Watrous}

\affil[1]{\small Department of Mathematics and Computer Science, Mount 
  Allison University}
\affil[2]{\small Institute for Quantum Computing and Department of 
  Combinatorics and Optimization, University of Waterloo}
\affil[3]{\small Department of Computer Science and Engineering, IIT Kanpur}
\affil[4]{\small Institute for Quantum Computing and School of Computer 
  Science, University of Waterloo}
\affil[5]{\small Canadian Institute for Advanced Research, Toronto}

\date{October 28, 2015}
\maketitle 

\begin{abstract}
  We study a generalization of nonlocal games---which we call
  \emph{extended nonlocal games}---in which the players, Alice and Bob,
  initially share a tripartite quantum state with the referee.
  In such games, the winning conditions for Alice and Bob may depend on
  outcomes of measurements made by the referee, on its part of the shared
  quantum state, in addition to Alice and Bob's answers to randomly selected
  questions.
  Our study of this class of games was inspired by the 
  \emph{monogamy-of-entanglement} games introduced by Tomamichel, Fehr,
  Kaniewski, and Wehner, which they also generalize.
  We prove that a natural extension of the Navascu\'{e}s--Pironio--Ac\'{i}n
  hierarchy of semidefinite programs converges to the optimal commuting
  measurement value of extended nonlocal games, and we prove two extensions of
  results of Tomamichel et al.~concerning monogamy-of-entanglement games.
\end{abstract}

%-----------------------------------------------------------------------------%
% Section: Introduction
%-----------------------------------------------------------------------------%

\section{Introduction} \label{sec:introduction}

\subsubsection*{Nonlocal games}

The nonlocal games model---although not always so named or defined
explicitly---has been studied in theoretical physics and classical complexity
theory for many years.
In theoretical physics, nonlocal games provide a natural framework in which
Bell inequality experiments, proposed by Bell \cite{Bell1964} in 1964 and
subsequently studied by Clauser, Horne, Shimoney, and Holt \cite{Clauser1969}
and many others, may be framed.
In classical complexity theory, nonlocal games provide a simple, abstract
model through which two-prover (or general multi-prover) interactive proof
systems have often been analyzed \cite{Ben-Or1988,Fortnow1989,Babai1991,Feige1991,Feige1994,Raz1998}.
These two independent lines of research were merged in the context of quantum
information and computation, and the result has been an active topic of
research \cite{Cleve2004, Brassard2005, Cleve2008, Doherty2008, Kempe2010,Kempe2010a,Kempe2011,Junge2011a,BuhrmanFS14,Regev2013,Dinur2013,Vidick2013,CM14}.

Mathematically speaking, a \emph{nonlocal game} is a cooperative game of
incomplete information played by two players, conventionally named \emph{Alice}
and \emph{Bob}.
The game is run by a \emph{referee}, who begins the game by selecting a pair of
questions $(x,y)$ at random according to a fixed probability distribution, and
then sends $x$ to Alice and $y$ to Bob.
Communication between Alice and Bob is forbidden during the game---without
knowing the other player's question (or answer), Alice and Bob must respond
with answers $a$ and $b$, respectively.
Upon receiving these answers, the referee evaluates a predicate $V(a,b | x,y)$
that determines whether Alice and Bob win or lose the game.
(More generally, the function $V$ may take arbitrary real values that represent
pay-offs for Alice and Bob.)
It is assumed that Alice and Bob have complete knowledge of the function $V$
and of the probability distribution from which the question pairs are drawn,
and are free to agree before the game starts on a joint strategy.

Different classes of strategies for nonlocal games may be considered.
For instance, Alice and Bob may use a \emph{classical strategy} in which
they answer \emph{deterministically}, with $a$ and $b$ determined by functions
of $x$ and $y$, respectively, or they may make use of \emph{randomness}
(which happens not to offer any advantages over an optimally chosen
deterministic strategy when their goal is to maximize their winning
probability or expected pay-off).
Alternatively, one may consider \emph{quantum strategies} for Alice and Bob,
where they initially share a joint quantum system, and allow their answers
$a$ and $b$ to be determined by the outcomes of measurements on this shared
system.
Within this category of strategies, one may consider different
sub-classifications, including strategies in which the size of the shared
quantum state available to Alice and Bob is limited, or strategies in which the
more conventional bipartite tensor product structure of a quantum system shared
between two individuals is relaxed to the requirement that Alice and Bob make
use of \emph{commuting measurements} on a single Hilbert space.
%Finally, one may consider relaxations of quantum strategies that may cease to
%be physically realizable (without additional resources), such as so-called
%\emph{no-signaling} strategies.

For each type of strategy, one may speak of the \emph{value} of a given
nonlocal game with respect to that strategy type, which is the supremum value
of the probability for Alice and Bob to win (or the supremum value of Alice and
Bob's expected pay-off) over all strategies of the given type.

\subsubsection*{Extended nonlocal games}

In this paper, we consider a generalization of nonlocal games in which the
\emph{referee also holds a quantum system}, provided to it by Alice and Bob at
the start of the game.
  The game begins in a similar way to a nonlocal game, with the referee
  selecting a pair of questions $(x,y)$ from the Cartesian product $X\times Y$
  of two alphabets $X$ and $Y$, according to a fixed probability distribution.
  The question $x$ is sent to Alice, who must respond with an answer $a$ from
  a fixed alphabet $A$, and likewise $y$ is sent to Bob, who must respond with
  a symbol $b$ from a fixed alphabet $B$.
Now, however, the outcome of the game is not directly determined as the value
$V(a,b|x,y)$ of a predicate or real-valued pay-off function $V$, but rather by
the result of a measurement performed by the referee on its share of the state
initially provided to it by Alice and Bob.
We will assume, more specifically, that Alice and Bob's pay-off is determined
by an \emph{observable} $V(a,b | x,y) \in \Herm(\complex^m)$, where $m$ denotes
the dimension of the referee's quantum system---so if Alice and Bob's response
$(a,b)$ to the question pair $(x,y)$ leaves the referee's system in the quantum
state
\begin{equation}
  \rho^{x,y}_{a,b} \in \Density(\complex^m),
\end{equation}
then their pay-off will be the real-number value
\begin{equation}
  \label{eq:pay-off}
  \Bigip{V(a,b|x,y)}{\rho^{x,y}_{a,b}}
\end{equation}
(where $\ip{M}{N} = \tr(M^{\ast} N)$ is the standard Hilbert--Schmidt inner
product on $m\times m$ matrices).
If one wishes to consider that the referee makes a binary-valued decision,
representing that Alice and Bob either win or lose the game, then it may be
required that each $V(a,b|x,y)$ is a measurement operator corresponding to the
winning outcome, so that \eqref{eq:pay-off} represents the probability that
Alice and Bob win conditioned on $(x,y)$ having been answered with $(a,b)$.
It is evident that games of this form, which we call
\emph{extended nonlocal games}, include ordinary nonlocal games as a special
case;
ordinary nonlocal games may be expressed as extended nonlocal games
for which $m=1$, meaning that the referee's quantum system is a trivial,
one-dimensional system. 

Similar to an ordinary nonlocal game, one may consider a variety of possible
strategies for Alice and Bob in an extended nonlocal game.
In particular, there are classes of strategies that are analogous to
classical strategies, standard quantum strategies, and commuting measurement
strategies.
Further details on these different classes of strategies can be found
later in Section~\ref{sec:extended-nonlocal-games}.

  The general notion of extended nonlocal games was previously considered
  by Fritz \cite{Fritz2012}.
  In particular, Fritz considered a class of games, called
  \emph{bipartite steering games}, which are essentially extended nonlocal
  games in which the referee randomly chooses to ask either Alice or Bob a
  question.
  Extended nonlocal games may also be viewed as being equivalent to
  multipartite steering inequalities, in a similar way to the equivalence
  between nonlocal games and Bell inequalities.
  Multipartite steering inequalities and related notions were studied in
  the recent papers \cite{Cavalcanti2015} and \cite{Sainz2015}.

\subsubsection*{Monogamy-of-entanglement games}

Extended nonlocal games also generalize
\emph{monogamy-of-entanglement games}, which were introduced by Tomamichel,
Fehr, Kaniewski, and Wehner \cite{Tomamichel2013}.
Monogamy-of-entanglement games, which also have relevance to the problem of
position-based cryptography, provide a framework to conceptualize the
fundamental monogamy property exhibited by entangled
qubits~\cite{Coffman2000}.
In short, this property states that for three possibly entangled qubits 
$\reg{X}$, $\reg{Y}$, and $\reg{Z}$, that if $\reg{X}$ and $\reg{Y}$
are maximally entangled, then $\reg{Z}$ must be completely uncorrelated with
$\reg{X}$ and $\reg{Y}$, and likewise for any permutation of these three
qubits.
This phenomena has been studied in a number of other
works~\cite{Terhal2001, Terhal2004, Koashi2004, Osborne2006}. 

A monogamy-of-entanglement game is a game played in a similar way to an
extended nonlocal game, as described above.
Specifically, Alice and Bob initially supply the referee with a quantum system,
the referee selects a single question $x\in X$ at random, sends this question 
to both Alice and Bob, performs a measurement
\begin{equation}
  \bigl\{ \Pi^x_a\,:\,a\in A\bigr\}
\end{equation}
on its quantum system, and declares Alice and Bob winners if and only if they
both respond with the same outcome $a\in A$ that the referee's measurement
produced.
Such a game is represented as an extended nonlocal game by taking $Y = X$ and
$B = A$ and setting $V(a,a|x,x) = \Pi^x_a$ for each choice of $x\in X$ and
$a\in A$, as well as $V(a,b|x,x) = 0$ for $a\not=b$.
In addition one may define $V(a,b|x,y)$ arbitrarily for all
$x\not=y$ and all $a,b\in A$;
these matrices are irrelevant to the description of the game because the
referee never asks a question pair $(x,y)$ where $x\not=y$ in a
monogamy-of-entanglement game.

\subsubsection*{Motivation and summary of results}

By studying extended nonlocal games we hope to identify commonalities between
nonlocal games and monogamy-of-entanglement games, and to potentially gain
insights on both models through this type of generalization.
We prove the following results:

\begin{mylist}{\parindent}
\item[1.]
  An extension of the NPA hierarchy of semidefinite programs to extended
  nonlocal games.
\end{mylist}

\noindent
Navascu\'es, Pironio, and Ac\'in \cite{Navascues2007,Navascues2008} proved that
the \emph{commuting measurement value} of a nonlocal game can be expressed
through a sequence of semidefinite programs.
The optimum values of the semidefinite programs in this sequence are
nonincreasing, each establishes an upper-bound on the value of the given game,
and the sequence of optimum values necessarily converges to the true commuting
measurement value of the game.
By extending this method, we describe a sequence of semidefinite programs,
for a given extended nonlocal game, that upper-bounds and converges to the
commuting measurement value of the extended nonlocal game in a similar way.
We note that a semidefinite program corresponding to an
  intermediate level of the hierarchy we define, between the first and second
  level, was considered in~\cite{Sainz2015}.

\begin{mylist}{\parindent}
\item[2.]
  Results on monogamy-of-entanglement games with two questions.
\end{mylist}

\noindent
We prove two facts about monogamy-of-entanglement games for the case in which
the question set $X$ contains just two elements that extend results of
Tomamichel, Fehr, Kaniewski, and Wehner.
First, we prove that Alice and Bob can always achieve the quantum value of such
a game by using a strategy that does not require them to store quantum
information: they provide the referee with a chosen state at the start of the
game, but act classically thereafter.
We also provide an example of a monogamy-of-entanglement game, in which the
question set $X$ has 4 elements and the answer set $A$ has 3 elements, for
which Alice and Bob must store quantum information to play optimally, implying
that this result on two-question monogamy-of-entanglement games does not
generalize to larger question sets.
Second, we prove that a bound of Tomamichel, Fehr, Kaniewski, and Wehner
concerning parallel repetition of monogamy-of-entanglement games defined by
projective measurements is tight for two-question games, implying a strong
parallel repetition property for such games.

%Relevant references~\cite{Fritz2012,Pusey2013,Cavalcanti2015,Sainz2015}

\subsubsection*{Organization of the paper}

In Section~\ref{sec:extended-nonlocal-games} we formally define the extended 
nonlocal game model, and consider the types of strategies that Alice and Bob
may use. 
In Section~\ref{sec:upper-bound-the-npa-referee-hierarchy}, we present
an extension of the NPA hierarchy, and prove that it converges to the commuting measurement value
  of a given extended nonlocal game.
In Section~\ref{sec:monogamy-of-entanglement-games}, we consider the class of
monogamy-of-entanglement-games and prove some results on
monogamy-of-entanglement-games with two questions.
%Finally, in Section~\ref{sec:conclusion}, we conclude with some open problems. 

%-----------------------------------------------------------------------------%
% section: Extended-nonlocal games
%-----------------------------------------------------------------------------%

\section{Extended nonlocal games} \label{sec:extended-nonlocal-games}

As was summarized in the introduction, an \emph{extended-nonlocal game} $G$ is
defined by a pair $(\pi,V)$, where $\pi$ is a probability distribution of the
form
\begin{equation}
  \pi: X \times Y \rightarrow [0,1]
\end{equation}
on the Cartesian product of two finite and nonempty sets $X$ and $Y$, and $V$
is a function of the form
\begin{equation}
  V: A \times B \times X\times Y \rightarrow \Herm(\complex^m),
\end{equation}
for $X$ and $Y$ as above, $A$ and $B$ being finite and nonempty sets, and
$m$ being a positive integer.
The sets $X$ and $Y$ represent sets of questions asked to Alice and Bob,
$A$ and $B$ represent Alice and Bob's sets of answers, and
$m$ represents the size of the quantum system initially provided to the referee
by Alice and Bob.
We write $\R = \complex^m$ to denote its corresponding complex vector space
for convenience.

As a result of Alice and Bob responding to the question pair $(x,y)$ with the
answer pair $(a,b)$, the referee's quantum system will be left in a quantum
state
\begin{equation}
  \rho^{x,y}_{a,b} \in \Density(\R),
\end{equation}
which is an $m\times m$ density operator.
The pay-off for Alice and Bob in this situation is given by the real number
\begin{equation}
  \Bigip{V(a,b|x,y)}{\rho^{x,y}_{a,b}}.
\end{equation}

\subsubsection*{Standard quantum strategies}

As suggested in the introduction, there are multiple classes of strategies that
may be considered for extended nonlocal games.
We will begin with \emph{standard quantum strategies}, which represent what is
arguably the most natural form of quantum strategy for the players Alice and
Bob in an extended nonlocal game.
A strategy of this form consists of finite-dimensional complex Hilbert spaces
$\A$ and $\B$ for Alice and Bob, respectively, a quantum state
$\rho\in\Density(\R\otimes\A\otimes\B)$, and two collections of measurements,
\begin{equation}
  \bigl\{A^x_a\,:\,a\in A\bigr\} \subset \Pos(\A)
  \quad\text{and}\quad
  \bigl\{B^y_b\,:\,b\in B\bigr\} \subset \Pos(\B),
\end{equation}
for each $x\in X$ and $y\in Y$, respectively.
That is, one has that
\begin{equation}
  \sum_{a\in A} A^x_a = \I_{\A}
  \quad\text{and}\quad
  \sum_{b\in B} B^y_b = \I_{\B}
\end{equation}
for each $x\in X$ and $y\in Y$.

When the game is played, Alice and Bob present the referee with a quantum
system so that the three parties share the state $\rho$.
The referee chooses $(x,y) \in X\times Y$ at random, according to the
probability distribution $\pi$, and sends $x$ to Alice and $y$ to Bob.
Alice measures her portion of $\rho$ with respect to the measurement
$\{A_a^x : a \in A\}$, and sends the result $a\in A$ of this measurement to the
referee.
Bob does likewise, sending the outcome $b\in B$ of the measurement
$\{B_b^y : b \in B\}$ to the referee.
Finally, the referee measures its quantum system and assigns a pay-off, as
specified by the observable $V(a,b|x,y)$.
The expected pay-off for such a strategy in the game $G = (\pi,V)$ is given by
\begin{equation}
  \sum_{(x,y)\in X\times Y} \pi(x,y)
  \sum_{(a,b)\in A\times B}
  \Bigip{V(a,b|x,y) \otimes A^x_a \otimes B^y_b}{\rho}.
\end{equation}
It is a simple consequence of Naimark's theorem that any strategy for Alice and
Bob that makes use of non-projective measurements can be simulated by
a projective measurement strategy, so there is no loss of generality in
restricting one's attention to projective measurements
$\{A_a^x : a \in A\}$ and $\{B_b^y : b \in B\}$ for Alice and Bob.

When analyzing a strategy for Alice and Bob as described above, it is
convenient to define a function 
$K:A\times B\times X\times Y \rightarrow \Pos(\R)$ as 
\begin{align}\label{eq:partial-trace-alice-bob}
  K(a,b|x,y) = \tr_{\A \otimes \B} 
  \Bigl( \bigl(\I_{\R} \otimes A_a^x \otimes B_b^y\bigr) \rho \Bigr)
\end{align}
for each $x\in X$, $y \in Y$, $a\in A$, and $b\in B$.

  We will refer to the function $K$ as an \emph{assemblage}, as it is
  representative of the notion of an assemblage in the context of tripartite
  quantum steering~\cite{Cavalcanti2015,Sainz2015}.
The operators output by this function represent the \emph{unnormalized} states
of the referee's quantum system when Alice and Bob respond to the question pair
$(x,y)$ with the answer pair $(a,b)$.
In particular, one has that $\tr(K(a,b|x,y))$ is the probability with which
Alice and Bob answer $(a,b)$ for the question pair $(x,y)$, and normalizing
this operator (assuming it is nonzero) yields the state 
\begin{equation}
  \rho^{x,y}_{a,b} =
  \frac{K(a,b|x,y)}{\tr(K(a,b|x,y))}
\end{equation}
of the referee's system conditioned on this question and answer pair.
Note that the function $K$ completely determines the performance of Alice and
Bob's strategy for $G$.
In particular, Alice and Bob's expected pay-off is represented as
\begin{equation}
  \label{eq:expected-pay-off}
  \sum_{(x,y) \in X\times Y} \pi(x,y) \sum_{(a,b) \in A \times B}
  \Bigip{V(a,b|x,y)}{K(a,b|x,y)}.
\end{equation}

For a given extended nonlocal game $G = (\pi,V)$, we write $\omega^{\ast}(G)$
to denote the \emph{quantum value} of $G$, which is the supremum value of
Alice and Bob's expected pay-off over all standard quantum strategies for $G$.

\subsubsection*{Unentangled strategies}

Next we consider a much more restricted form of strategy called an
\emph{unentangled strategy}.
These are standard quantum strategies for which the state
$\rho\in\Density(\R\otimes\A\otimes\B)$ initially prepared by Alice and Bob
is fully separable, meaning that it takes the form
\begin{equation}
  \rho = \sum_{j = 1}^N p_j\, \rho^{\reg{R}}_j\otimes
  \rho^{\reg{A}}_j\otimes \rho^{\reg{B}}_j
\end{equation}
for a probability vector $(p_1,\ldots,p_N)$ and density operators
\begin{equation}
  \rho^{\reg{R}}_1,\ldots,\rho^{\reg{R}}_N\in\Density(\R),\quad
  \rho^{\reg{A}}_1,\ldots,\rho^{\reg{A}}_N\in\Density(\A),\quad\text{and}\quad
  \rho^{\reg{B}}_1,\ldots,\rho^{\reg{B}}_N\in\Density(\B).
\end{equation}

One may prove that any unentangled strategy is equivalent to one in which Alice
and Bob store only classical information once the referee's quantum system has
been provided to it.
Indeed, any such strategy is equivalent to one given by a convex combination of
\emph{deterministic strategies}, in which Alice and Bob initially provide the
referee with a fixed pure state $\rho = u u^{\ast}\in\Density(\R)$, and respond
to questions deterministically, with Alice responding to $x\in X$ with 
$a = f(x)$ and Bob responding to $y\in Y$ with $b = g(y)$ for functions
$f:X\rightarrow A$ and $g:Y\rightarrow B$.

For a given game $G = (\pi,V)$, we write $\omega(G)$ to denote the
\emph{unentangled value} of $G$, which is the supremum value for Alice and
Bob's expected pay-off in $G$ over all unentangled strategies.
It follows by convexity that this supremum value is necessarily achieved by
some deterministic strategy, and can be represented as
\begin{equation}
  \omega(G) = \max_{f,g} \Biggnorm{\sum_{(x,y) \in X\times Y}
    \pi(x,y) V\bigl(f(x),g(y) | x,y\bigr)}
\end{equation}
where the maximum is over all functions $f:X \rightarrow A$ and
$g:Y\rightarrow B$.

\subsubsection*{commuting measurement strategies}

The last type of strategy we consider for Alice and Bob in an extended nonlocal
game is a \emph{commuting measurement strategy}, which is a (potentially) more
general type of strategy than a standard quantum strategy.
A commuting measurement strategy is similar to a standard quantum strategy,
except now the bipartite tensor product space $\A\otimes\B$ shared by Alice and
Bob is replaced by a single (possibly infinite-dimensional) Hilbert space $\H$.
Alice and Bob initially prepare a quantum state
\begin{equation}
  \rho \in \Density(\R\otimes\H)
\end{equation}
and give to the referee its portion of this state.
Alice and Bob's measurements on $\A$ and $\B$ are replaced by measurements on
the space $\H$, so that
\begin{equation}
  \bigl\{ A^x_a\,:\,x\in X,\: a\in A\bigr\}
  \quad\text{and}\quad
  \bigl\{ B^y_b\,:\,y\in Y,\: b\in B\bigr\}
\end{equation}
are collections of positive semidefinite operators on $\H$, representing
measurements for each choice of $x\in X$ and $y\in Y$.
It is required that each of Alice's measurements commutes with each of Bob's
measurements, meaning that
\begin{equation}
  \bigl[ A^x_a,B^y_b\bigr] = 0
\end{equation}
for all $x\in X$, $y\in Y$, $a\in A$, and $b\in B$.
Similar to standard quantum strategies, there is no generality lost in
considering only projective measurements for Alice and Bob.

The expected pay-off for a commuting measurement strategy, as just described,
in an extended nonlocal game $G = (\pi,V)$, is given by
\begin{equation}
  \label{eq:extended-nlg-commuting-prob}
  \sum_{(x,y) \in X\times Y} \pi(x,y) \sum_{(a,b) \in A\times B}
  \bigip{V(a,b|x,y) \otimes A_a^x B_b^y}{\rho}.
\end{equation}
The \emph{commuting measurement value} of $G$, which is denoted $\omega_c(G)$,
is the supremum value of the expected pay-off of $G$ taken over all commuting
measurement strategies for Alice and Bob.

Along similar lines to standard quantum strategies, a commuting measurement
strategy as above defines a function
$K:A\times B\times X\times Y\rightarrow\Pos(\R)$ as
\begin{equation}
  K(a,b|x,y) = 
  \tr_{\H} \Bigl( \bigl(\I_{\R} \otimes A_a^x B_b^y\bigr) \rho \Bigr)
\end{equation}
for each $x\in X$, $y \in Y$, $a\in A$, and $b\in B$.

  Any function arising from a commuting measurement strategy in this way will be
  called a \emph{commuting measurement assemblage}.
  It is unknown if every commuting measurement assemblage $K$ is induced by a
  standard quantum strategy---as shown by Fritz \cite{Fritz2012}, this problem
  is closely related to the long-standing Connes Embedding conjecture.

%-----------------------------------------------------------------------------%
% section: Upper bound: The NPA-referee hierarchy
%-----------------------------------------------------------------------------%

\section{The NPA hierarchy for extended nonlocal games}
\label{sec:upper-bound-the-npa-referee-hierarchy}

In this section we describe how the semidefinite programming hierarchy of
Navascu\'{e}s, Pironio, and Ac\'{i}n~\cite{Navascues2007, Navascues2008}
may be generalized to extended nonlocal games.
We will begin by describing the construction of the hierarchy, and then prove
that the hierarchy converges to the commuting measurement value of an extended
nonlocal game.

\subsubsection*{Construction of the extended NPA hierarchy}

Assume that finite and nonempty question and answer sets $X$, $Y$, $A$, and
$B$, as well as a positive integer $m$ representing the dimension of the
referee's quantum system, have been fixed.
We first introduce three alphabets:
\begin{equation}
  \Sigma_A = X \times A,\quad
  \Sigma_B = Y \times B,\quad\text{and}\quad
  \Sigma = \Sigma_A \cupdot \Sigma_B.
\end{equation}
Here, $\cupdot$ denotes the disjoint union, meaning that $\Sigma_A$ and
$\Sigma_B$ are to be treated as disjoint sets when forming $\Sigma$.
For every nonnegative integer $k$, we will write $\Sigma^{\leq k}$ to denote
the set of strings over the alphabet $\Sigma$ having length at most $k$, we
write $\Sigma^{\ast}$ to denote the set of all strings (of finite length) over
$\Sigma$, and we write $\varepsilon$ to denote the empty string.

Next, define $\sim$ to be the equivalence relation on $\Sigma^{\ast}$
generated by the following rules:\vspace{2mm}

\noindent
\begin{tabular}{@{\hspace*{1.2mm}}lll}
  1. \hspace*{-3mm} & $s \sigma t \sim s \sigma \sigma t$ & (for every
  $s,t\in\Sigma^{\ast}$ and $\sigma \in \Sigma$).\\[1mm]
  2. \hspace*{-3mm} & $s \sigma \tau t \sim s \tau \sigma t$ & 
  (for every $s,t\in\Sigma^{\ast}$, $\sigma \in \Sigma_A$, and 
  $\tau\in\Sigma_B$).\\[2mm]
\end{tabular}

\noindent
That is, two strings are equivalent with respect to the relation $\sim$
if and only if one can be obtained from the other by a finite number of
applications of the above rules.

Now, a function of the form 
\begin{equation}
  \phi: \Sigma^{\ast} \rightarrow \complex
\end{equation}
will be said to be \emph{admissible} if and only if the following conditions are
satisfied:
\begin{mylist}{\parindent}
\item[1.] For every choice of strings $s,t\in\Sigma^{\ast}$ it holds that
  \begin{equation}
    \sum_{a\in A} \phi(s (x,a) t) = \phi(st)
    \quad\text{and}\quad
    \sum_{b\in B} \phi(s (y,b) t) = \phi(st)
  \end{equation}
  for every $x\in X$ and $y\in Y$.
\item[2.]
  For every choice of strings $s,t\in \Sigma^{\ast}$, it holds that
  \begin{equation}
    \phi(s (x,a) (x,a') t) = 0 \quad\text{and}\quad
    \phi(s (y,b) (y,b') t) = 0
  \end{equation}
  for every choice of $x\in X$ and $a,a'\in A$ satisfying $a \not= a'$, 
  and every choice of $y\in Y$ and $b,b'\in B$ satisfying $b \not= b'$, 
  respectively.
\item[3.]
  For all strings $s,t\in\Sigma^{\ast}$ satisfying $s\sim t$ it
  holds that $\phi(s) = \phi(t)$.
\end{mylist}
Along similar lines, a function of the form
\begin{equation}
  \phi: \Sigma^{\leq k} \rightarrow \complex
\end{equation}
is said to be \emph{admissible} if and only if the same conditions listed above
hold, provided that $s$ and $t$ are sufficiently short so that $\phi$ is defined
on the arguments indicated within each condition.

Finally, for each positive integer $k$ (representing a level of approximation in
the hierarchy to be constructed), we consider the set of all block matrices of
the form
\begin{equation}
  \label{eq:block-matrix}
  M^{(k)}
  = \begin{pmatrix}
    M^{(k)}_{1,1} & \cdots & M^{(k)}_{1,m}\\
    \vdots & \ddots & \vdots\\
    M^{(k)}_{m,1} & \cdots & M^{(k)}_{m,m}
  \end{pmatrix},
\end{equation}
where each of the blocks takes the form
\begin{equation}
  M^{(k)}_{i,j} : \Sigma^{\leq k} \times \Sigma^{\leq k} \rightarrow \complex,
\end{equation}
and for which the following conditions are satisfied:
\begin{mylist}{\parindent}
\item[1.]
  For every choice of $i,j\in\{1,\ldots,m\}$, there exists an admissible
  function
  \begin{equation}
    \phi_{i,j}:\Sigma^{\leq 2k}\rightarrow \complex
  \end{equation}
  such that
  \begin{equation}
    M^{(k)}_{i,j}(s,t) = \phi_{i,j}(s^{\mathsmaller{R}}t)
  \end{equation}
  for every choice of strings $s,t\in\Sigma^{\leq k}$.
  (Here, the notation $s^{\mathsmaller{R}}$ means the \emph{reverse} of the
  string $s$.)
\item[2.]
  It holds that
  \begin{equation}
    \label{eq:pseudo-correlation-constraint-3}
    M^{(k)}_{1,1}(\varepsilon,\varepsilon) + \cdots +
    M^{(k)}_{m,m}(\varepsilon,\varepsilon) = 1.
  \end{equation}
\item[3.]
  The matrix $M^{(k)}$ is positive semidefinite.
\end{mylist}
Matrices of the form \eqref{eq:block-matrix} obeying the listed constraints will
be called \emph{$k$-th order admissible matrices}.
For such a matrix, we write $M^{(k)}(s,t)$ to denote the $m\times m$ complex
matrix
\begin{equation}
  M^{(k)}(s,t)
  = \begin{pmatrix}
    M^{(k)}_{1,1}(s,t) & \cdots & M^{(k)}_{1,m}(s,t)\\
    \vdots & \ddots & \vdots\\
    M^{(k)}_{m,1}(s,t) & \cdots & M^{(k)}_{m,m}(s,t)
  \end{pmatrix},
\end{equation}
for each choice of strings $s,t\in\Sigma^{\leq k}$.
With respect to this notation, the second and third conditions on $M^{(k)}$
imply that $M^{(k)}(\varepsilon,\varepsilon)$ is an $m\times m$ density matrix.

We observe that an optimization over all $k$-th order admissible
matrices can be represented by a semidefinite program: a matrix of the form
\eqref{eq:block-matrix} is a $k$-th order admissible matrix if and only if it
is positive semidefinite and satisfies a finite number of linear constraints
imposed by the first two conditions on $M^{(k)}$.
In particular, for an extended nonlocal game $G = (\pi,V)$,
where $\pi$ is a distribution over $X\times Y$ and $V$ is a function
$V:A \times B \times X\times Y\rightarrow\Herm(\complex^m)$, 
one may consider the maximization of the quantity
\begin{equation}
  \sum_{x,y,a,b} \pi(x,y) \Bigip{V(a,b|x,y)}{M^{(k)}((x,a),(y,b))}
\end{equation}
subject to $M^{(k)}$ being a $k$-th order admissible matrix.

We also note that the hierarchy of Navascu\'{e}s, Pironio, and Ac\'{i}n
corresponds precisely to the $m=1$ case of the hierarchy just described.

\subsubsection*{Convergence of the extended NPA hierarchy}

Now, for a fixed choice of sets $X$, $Y$, $A$, and $B$, as well as positive
integers $m$ and $k$, let us consider the set of all functions of the form
\begin{equation}
  K : A\times B \times X\times Y \rightarrow \Lin(\complex^m)
\end{equation}
for which there exists a $k$-th order admissible matrix $M^{(k)}$ that satisfies
\begin{equation}
  K(a,b|x,y) = M^{(k)}((x,a),(y,b))
\end{equation}
for every $x\in X$, $y\in Y$, $a\in A$, and $b\in B$.

  The set of all such functions will be called
  \emph{$k$-th order pseudo commuting measurement assemblages}.

\begin{theorem}
  Let $X$, $Y$, $A$, and $B$ be finite sets, let $m$ be a positive integer, and
  let
  \begin{equation}
    K : A\times B\times X\times Y \rightarrow \Lin(\complex^m)
  \end{equation}
  be a function.
  The following statements are equivalent:
  \begin{mylist}{\parindent}
  \item[1.]
    The function $K$ is a commuting measurement assemblage.
  \item[2.]
    The function $K$ is a $k$-th order pseudo commuting measurement assemblage for every
    positive integer $k$.
  \end{mylist}
\end{theorem}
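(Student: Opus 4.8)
The plan is to prove the two implications separately. The direction $(1)\Rightarrow(2)$ is a direct ``the relaxation is valid'' check: any genuine commuting measurement strategy can be turned, level by level, into an admissible matrix. The direction $(2)\Rightarrow(1)$ is the substantive one, obtained through a limiting GNS-type construction that mimics the convergence proof for the ordinary NPA hierarchy, and I expect its two technical pressure points to be a uniform boundedness estimate (needed to extract a limit) and the verification that the left-regular operators of the GNS construction are bounded.

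For $(1)\Rightarrow(2)$, suppose $K$ is the commuting measurement assemblage of a strategy consisting of $\rho\in\Density(\R\otimes\H)$ together with projective measurements $\{A^x_a\}$ and $\{B^y_b\}$ on $\H$ with $[A^x_a,B^y_b]=0$. For a string $u=\sigma_1\cdots\sigma_\ell\in\Sigma^{\ast}$ write $E_u=E_{\sigma_1}\cdots E_{\sigma_\ell}$, where $E_{\varepsilon}=\I_{\H}$, $E_{(x,a)}=A^x_a$, and $E_{(y,b)}=B^y_b$, and for each $k$ and $s,t\in\Sigma^{\leq k}$ let $M^{(k)}_{i,j}(s,t)$ be the appropriate entry of $\tr_{\H}\bigl((\I_{\R}\otimes E_s^{\ast}E_t)\rho\bigr)$ (up to the conventional transpose on the referee's register that is intrinsic to this kind of construction), coming from the admissible functions $\phi_{i,j}(u)=$ the corresponding entry of $\tr_{\H}\bigl((\I_{\R}\otimes E_u)\rho\bigr)$. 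One then checks: the three admissibility conditions for each $\phi_{i,j}$ follow respectively from $\sum_{a}A^x_a=\I_{\H}$ and $\sum_{b}B^y_b=\I_{\H}$, from $A^x_aA^x_{a'}=0$ and $B^y_bB^y_{b'}=0$ for distinct answers, and from the $E_{\sigma}$ being idempotent together with Alice's operators commuting with Bob's, so that $E_s=E_t$ whenever $s\sim t$; the normalization \eqref{eq:pseudo-correlation-constraint-3} is just $\tr(\rho)=1$; positive semidefiniteness follows by purifying $\rho=\sum_n|\xi_n\rangle\langle\xi_n|$ and recognizing $M^{(k)}$ as a sum over $n$ of Gram matrices of the vectors $\{E_t\,\xi_n^{(i)}\}_{(i,t)}$; and $M^{(k)}((x,a),(y,b))=\tr_{\H}\bigl((\I_{\R}\otimes A^x_aB^y_b)\rho\bigr)=K(a,b|x,y)$ since $E_{(x,a)}^{\ast}=A^x_a$. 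Hence $K$ is a $k$-th order pseudo commuting measurement assemblage for every $k$.

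For $(2)\Rightarrow(1)$, the first ingredient is a uniform boundedness lemma: every entry of every $k$-th order admissible matrix has modulus at most $1$. Since such a matrix is positive semidefinite, Cauchy--Schwarz on its entries reduces this to bounding the nonnegative diagonal entries $M^{(k)}_{i,i}(s,s)$, which one does by induction on $|s|$: the base case is that $M^{(k)}(\varepsilon,\varepsilon)$ is a density matrix, and for $s=(x,a)s'$ one collapses the repeated leading letter with rule~1 and sums over answers with admissibility condition~1 to get $\sum_{a'\in A}M^{(k)}_{i,i}\bigl((x,a')s',(x,a')s'\bigr)=M^{(k)}_{i,i}(s',s')\leq 1$, so that each nonnegative summand is at most $1$ (the case $s=(y,b)s'$ is symmetric). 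Now assume $K$ is a $k$-th order pseudo commuting measurement assemblage for every $k$, and for each $k$ fix a $k$-th order admissible matrix $M^{(k)}$ with $M^{(k)}((x,a),(y,b))=K(a,b|x,y)$, writing $M^{(k)}_{i,j}(s,t)=\phi^{(k)}_{i,j}(s^{\mathsmaller{R}}t)$ for admissible $\phi^{(k)}_{i,j}\colon\Sigma^{\leq 2k}\to\complex$. By the boundedness lemma the $\phi^{(k)}_{i,j}$ are uniformly bounded, so a diagonalization argument over the countable set of triples $(i,j,u)$ with $u\in\Sigma^{\ast}$ produces a subsequence along which $\phi^{(k)}_{i,j}(u)\to\phi_{i,j}(u)$ for all $i,j,u$. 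All the defining constraints are closed under pointwise limits, so the $\phi_{i,j}\colon\Sigma^{\ast}\to\complex$ are admissible, the block object $M^{\infty}$ given by $M^{\infty}_{i,j}(s,t)=\phi_{i,j}(s^{\mathsmaller{R}}t)$ has all of its finite principal submatrices positive semidefinite, the normalization $\sum_i\phi_{i,i}(\varepsilon)=1$ persists, and $M^{\infty}((x,a),(y,b))=K(a,b|x,y)$.

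The final step recovers a commuting measurement strategy from $M^{\infty}$. On the vector space $\V_0=\complex^m\otimes\complex[\Sigma^{\ast}]$ of finitely supported formal combinations of pairs $e_i\otimes s$, define the sesquilinear form $\bigl\langle e_i\otimes s,\,e_j\otimes t\bigr\rangle=M^{\infty}_{i,j}(s,t)$; this is positive semidefinite because all finite principal submatrices of $M^{\infty}$ are, so letting $\N$ be its radical and $\H$ the completion of $\V_0/\N$ gives a Hilbert space. For $\sigma\in\Sigma$, left multiplication $L_{\sigma}(e_i\otimes t)=e_i\otimes\sigma t$ is symmetric on $\V_0$ (using that $M^{\infty}_{i,j}(s,t)$ depends only on $s^{\mathsmaller{R}}t$); for a fixed $x$ the operators $\{L_{(x,a)}\}_{a\in A}$ are mutually orthogonal idempotents summing to the identity (by admissibility conditions~1 and~2 together with rule~1), so each is a contraction, descends to $\V_0/\N$, and extends to a projection on $\H$, and likewise for Bob; and $[L_{(x,a)},L_{(y,b)}]=0$ by rule~2. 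Put $A^x_a=L_{(x,a)}$, $B^y_b=L_{(y,b)}$, $|\psi_i\rangle=[\,e_i\otimes\varepsilon\,]\in\H$, $|\psi\rangle=\sum_{i=1}^m e_i\otimes|\psi_i\rangle\in\R\otimes\H$, and $\rho=|\psi\rangle\langle\psi|$. Then $\langle\psi|\psi\rangle=\sum_i\langle\psi_i|\psi_i\rangle=\sum_i M^{\infty}_{i,i}(\varepsilon,\varepsilon)=1$, so $\rho\in\Density(\R\otimes\H)$; the measurements are projective and commute across the two parties; and computing the induced assemblage gives its $(i,j)$-entry as $\langle\psi_j|A^x_aB^y_b|\psi_i\rangle=\phi_{j,i}\bigl((x,a)(y,b)\bigr)$, which equals the corresponding entry of $K(a,b|x,y)$ (using rule~2 once more together with Hermiticity of $M^{\infty}$, with the same index convention fixed in the first direction). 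Thus this commuting measurement strategy induces exactly $K$. The delicate points, as noted, are the boundedness lemma---which powers the compactness step producing $M^{\infty}$---and boundedness of the $L_\sigma$; it is also worth flagging that one does \emph{not} need $\H$ to carry a tensor factor equal to $\R$, and that taking $\H$ to be the entire GNS space while using the single vector $\sum_i e_i\otimes[\,e_i\otimes\varepsilon\,]$ to carry the referee's register is precisely what makes the $m>1$ case go through.
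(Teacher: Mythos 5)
Your proof is correct and follows essentially the same route as the paper: the easy direction via moment matrices of the strategy, then the uniform boundedness lemma (identical induction on the diagonal), a compactness argument to pass to a limiting infinite admissible matrix, and a Gram-vector/GNS realization on which the letters act as commuting projections. The only cosmetic differences are that you extract the limit by diagonalization rather than citing Banach--Alaoglu, and you realize the measurement operators as bounded symmetric idempotents of the left-regular action instead of the paper's projections onto explicit spans with the identity $\Pi^z_c u_{j,s} = u_{j,(z,c)s}$ --- these are equivalent, and your explicit handling of the transpose/index convention on the referee's register is, if anything, more careful than the paper's.
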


\begin{proof}
  The simpler implication is that statement 1 implies statement 2.
  Under the assumption that statement 1 holds, it must be that $K$ is defined
  by a strategy in which Alice and Bob use projective measurements,
  $\{A_a^x : a \in A\}$ for Alice and $\{B_b^y : b \in B \}$ for Bob,
  on a shared Hilbert space $\H$, along with a pure state $u\in\R\otimes\H$.
  Let $u_1,\ldots,u_m \in \H$ be vectors for which
  \begin{equation}
    u = \sum_{j = 1}^m e_j \otimes u_j.
  \end{equation}
  Also let $\Pi^z_c$ denote $A^z_c$ if $z\in X$ and $c\in A$, or $B^z_c$ if
  $z\in Y$ and $c\in B$.
  With respect to this notation, one may consider the $k$-th order
  admissible matrix $M^{(k)}$ defined by
  \begin{equation}
    M^{(k)}_{i,j}(s,t) = \phi_{i,j}(s^{\mathsmaller{R}}t),
  \end{equation}
  where the functions $\{\phi_{i,j}\}$ are defined as
  \begin{equation}
    \phi_{i,j} \bigl((z_1, c_1) \cdots (z_\ell, c_\ell)\bigr) 
    = u_i^* \Pi_{c_1}^{z_1} \cdots \Pi_{c_\ell}^{z_\ell} u_j
  \end{equation}
  for every string $(z_1, c_1) \cdots (z_\ell, c_\ell)\in\Sigma^{\leq 2k}$.
  A verification reveals that this matrix is consistent with $K$, and therefore
  $K$ is a $k$-th order pseudo commuting measurement
    assemblage.

  The more difficult implication is that statement 2 implies statement 1.
  The basic methodology of the proof is similar to the $m=1$ case proved in
  \cite{Navascues2008}, and we will refer to arguments made in that paper
  when they extend to the general case.
  For every positive integer $k$, let $M^{(k)}$ be a $k$-th order admissible
  matrix satisfying $K(a,b|x,y) = M^{(k)}((x,a),(y,b))$ for every $x\in X$,
  $y\in Y$, $a\in A$, and $b\in B$.

  First, one may observe that for every choice of $k\geq 1$, it holds that
  \begin{equation}
    \Bigabs{M_{i,j}^{(k)}(s,t)} \leq 1
  \end{equation}
  for every choice of $i,j\in\{1,\ldots,m\}$ and $s,t\in\Sigma^{\leq k}$.
  To see that this is so, observe first that
  \begin{equation}
    \Bigabs{M_{i,j}^{(k)}(s,t)} \leq 
    \sqrt{M_{i,i}^{(k)}(s,s)\vphantom{M_{j,j}^{(k)}(t,t)}}
    \sqrt{M_{j,j}^{(k)}(t,t)}
  \end{equation}
  for each $i,j\in\{1,\ldots,m\}$ and $s,t\in\Sigma^{\ast}$, which is a
  consequence of the fact that each $2\times 2$ submatrix
  \begin{equation}
    \begin{pmatrix}
      M_{i,i}^{(k)}(s,s) & M_{i,j}^{(k)}(s,t) \\[2mm]
      M_{j,i}^{(k)}(t,s) & M_{j,j}^{(k)}(t,t)
    \end{pmatrix}
  \end{equation}
  of $M^{(k)}$ is positive semidefinite.
  It therefore suffices to prove that 
  \begin{equation}
    \label{eq:diagonal-bound}
    M_{i,i}^{(k)}(s,s) \leq 1
  \end{equation}
  for every $i\in\{1,\ldots,m\}$ and $s\in\Sigma^{\leq k}$.
  The bound \eqref{eq:diagonal-bound} may be proved by induction on the length
  of~$s$.
  For the base case, one has that
  $M_{i,i}^{(k)}(\varepsilon,\varepsilon) \leq 1$
  by the constraint \eqref{eq:pseudo-correlation-constraint-3}, along with the
  fact that the diagonal entries of $M^{(k)}$ are nonnegative.
  For the general case, one has that for any string $t\in\Sigma^{\ast}$ and any
  choice of $(z,c)\in\Sigma$, it holds that
  \begin{equation}
    \begin{multlined}
      M_{i,i}^{(k)}((z,c)t,(z,c)t)
      \leq
      \sum_d M_{i,i}^{(k)}((z,d)t,(z,d)t)
      = \sum_d
      \phi_{i,i}^{(k)}(t^{\mathsmaller{R}}(z,d)(z,d)t)\\
      = \sum_d
      \phi_{i,i}^{(k)}(t^{\mathsmaller{R}}(z,d)t)
      = \phi_{i,i}^{(k)}(t^{\mathsmaller{R}}t)
      = M_{i,i}^{(k)}(t,t),
    \end{multlined}
  \end{equation}
  where the sums are over all $d\in A$ or $d\in B$ depending on whether
  $z\in X$ or $z\in Y$, respectively.
  By the hypothesis of induction the required bound \eqref{eq:diagonal-bound}
  follows.

  Next, reasoning in the same way as \cite{Navascues2008} through the use of
  the Banach--Alaoglu theorem, one finds that there must exist an infinite
  matrix of the form
  \begin{equation}
    M
    = \begin{pmatrix}
      M_{1,1} & \cdots & M_{1,m}\\
      \vdots & \ddots & \vdots\\
      M_{m,1} & \cdots & M_{m,m}
    \end{pmatrix},
  \end{equation}
  where
  \begin{equation}
    M_{i,j} : \Sigma^{\ast} \times \Sigma^{\ast} \rightarrow \complex
  \end{equation}
  for each $i,j\in\{1,\ldots,m\}$, satisfying similar constraints to the finite
  matrices $M^{(k)}$.
  In particular, it must hold that 
  \begin{equation}
    M_{i,j}(s,t) = \phi_{i,j}(s^{\mathsmaller{R}}t)
  \end{equation}
  for a collection of admissible functions $\{\phi_{i,j}\}$ taking the form
  \begin{equation}
    \phi_{i,j}:\Sigma^{\ast} \rightarrow \complex,
  \end{equation}
  it must hold that all finite submatrices of $M$ are positive semidefinite,
  and it must hold that
  $M_{1,1}(\varepsilon,\varepsilon) + \cdots +
  M_{m,m}(\varepsilon,\varepsilon) = 1$.
  Consequently, there must exist a collection of vectors
  \begin{equation}
    \label{eq:vectors-in-H}
    \bigl\{u_{i,s}\,:\,i\in\{1,\ldots,m\},\;s\in\Sigma^{\ast}\}\subset\H
  \end{equation}
  chosen from a (separable) Hilbert space $\H$ for which it holds that
  \begin{equation}
    M_{i,j}(s,t) = \bigip{u_{i,s}}{u_{j,t}}
  \end{equation}
  for every choice of $i,j\in\{1,\ldots,m\}$ and $s,t\in\Sigma^{\ast}$.
  Furthermore, it must hold that
  \begin{equation}
    K(a,b|x,y) = M((x,a),(y,b))
  \end{equation}
  where, as for the matrices $M^{(k)}$, we write
  \begin{equation}
    M(s,t) = 
    \begin{pmatrix}
      M_{1,1}(s,t) & \cdots & M_{1,m}(s,t)\\
      \vdots & \ddots & \vdots\\
      M_{m,1}(s,t) & \cdots & M_{m,m}(s,t)
    \end{pmatrix}
  \end{equation}
  for each $s,t\in\Sigma^{\ast}$.
  There is no loss of generality in assuming $\H$ is spanned by
  the vectors \eqref{eq:vectors-in-H}, for otherwise $\H$ can simply be replaced
  by the (possibly finite-dimensional) subspace spanned by these vectors.

  Now we will define a commuting measurement strategy for Alice and Bob
  certifying that $K$ is a commuting measurement assemblage.
  The state initially prepared by Alice an Bob, and shared with the referee,
  will be the pure state corresponding to the vector
  \begin{equation}
    u = \sum_{j = 1}^m e_j \otimes u_{j,\varepsilon} \in \complex^m\otimes\H.
  \end{equation}
  This is a unit vector, as a calculation reveals:
  \begin{equation}
    \norm{u}^2 = \sum_{j = 1}^m 
    \ip{u_{j,\varepsilon}}{u_{j,\varepsilon}}
    = M_{1,1}(\varepsilon,\varepsilon) + \cdots +
    M_{m,m}(\varepsilon,\varepsilon) = 1.
  \end{equation}

  Next we define projective measurements on $\H$ for Alice and Bob.
  For each $(z,c)\in\Sigma$, define $\Pi^z_c$ to be the projection operator
  onto the span of the set
  \begin{equation}
    \bigl\{u_{j,(z,c)s}\,:\,j\in\{1,\ldots,m\},\;s\in\Sigma^{\ast}\bigr\}.
  \end{equation}
  It must, of course, be proved that these projections do indeed form projective
  measurements, and that Alice's measurements commute with Bob's.
  Toward these goals, consider any choice of $i,j\in\{1,\ldots,m\}$,
  $s,t\in\Sigma^{\ast}$, and $(z,c)\in\Sigma$, and observe that
  \begin{equation}
    \begin{multlined}
      \bigip{u_{i,(z,c)t}}{u_{j,s}}
      = M_{i,j}((z,c)t,s) = \phi_{i,j}(t^{\mathsmaller{R}}(z,c)s)\\
      = \phi_{i,j}(t^{\mathsmaller{R}}(z,c)(z,c)s)
      = M_{i,j}((z,c)t,(z,c)s) 
      = \bigip{u_{i,(z,c)t}}{u_{j,(z,c)s}}.
    \end{multlined}
  \end{equation}
  It follows that $u_{j,s}$ and $u_{j,(z,c)s}$ have the same inner product
  with every vector in the image of $\Pi^z_c$.
  As every vector in the orthogonal complement of the image of $\Pi^z_c$ is
  obviously orthogonal to $u_{j,(z,c)s}$, as this vector is contained in the
  image of $\Pi^z_c$, it follows that
  \begin{equation}
    \label{eq:projection-on-vectors}
    \Pi^z_c u_{j,s} = u_{j,(z,c)s}.
  \end{equation}
  This formula greatly simplifies the required verifications.
  For instance, one has
  \begin{equation}
    \bigip{u_{i,(z,c)t}}{u_{j,(z,d)s}}
    = M_{i,j}((z,c)t,(z,d)s) = \phi_{i,j}(t^{\mathsmaller{R}}(z,c)(z,d)s)
    = 0
  \end{equation}
  for all $i,j\in\{1,\ldots,m\}$, $s,t\in\Sigma^{\ast}$, and
  $(z,c),(z,d)\in\Sigma$ for which $c\not=d$, and therefore
  $\Pi^z_c \Pi^z_d = 0$
  whenever $(z,c),(z,d)\in\Sigma$ satisfy $c\not=d$.
  For each $x\in X$, and each $i,j\in\{1,\ldots,m\}$ and
  $s,t\in\Sigma^{\ast}$, it holds that
  \begin{equation}
    \sum_{a\in A} \bigip{u_{i,s}}{\Pi^x_a u_{j,t}}
    = \sum_{a\in A} \bigip{u_{i,s}}{u_{j,(x,a)t}}
    = \sum_{a\in A} \phi_{i,j}(s^{\mathsmaller{R}}(x,a)t)
    = \phi_{i,j}(s^{\mathsmaller{R}}t) = \bigip{u_{i,s}}{u_{j,t}}
  \end{equation}
  and therefore
  \begin{equation}
    \sum_{a\in A}\Pi^x_a = \I, 
  \end{equation}
  for each $x\in X$, and along similar lines one finds that
  \begin{equation}
    \sum_{b\in B}\Pi^y_b = \I
  \end{equation}
  for each $y\in Y$.
  Finally, for every $i,j\in\{1,\ldots,m\}$, $s,t\in\Sigma^{\ast}$, 
  $(x,a)\in\Sigma_A$, and $(y,b)\in\Sigma_B$ we have
  \begin{equation}
    \begin{multlined}
      \Bigip{u_{i,s}}{\Pi^x_a \Pi^y_b u_{j,t}}
      = \bigip{u_{i,(x,a)s}}{u_{j,(y,b)t}}
      = \phi_{i,j}\bigl(s^{\mathsmaller{R}}(x,a)(y,b)t\bigr)\\
      = \phi_{i,j}\bigl(s^{\mathsmaller{R}}(y,b)(x,a)t\bigr)
      = \bigip{u_{i,(y,b)s}}{u_{j,(x,a)t}}
      = \bigip{u_{i,s}}{\Pi^y_b \Pi^x_a u_{j,t}},
    \end{multlined}
  \end{equation}
  and therefore $\bigl[\Pi^x_a,\Pi^y_b\bigr] = 0$.

  It remains to observe that the strategy represented by the pure state
  $u$ and the projective measurements $\{\Pi^x_a\}$ and $\{\Pi^y_b\}$
  yields the commuting measurement assemblage $K$.
  This is also evident from the equation \eqref{eq:projection-on-vectors}, as
  one has
  \begin{equation}
    M_{i,j}((x,a),(y,b)) =
    \bigip{u_{i,(x,a)}}{u_{j,(y,b)}} =
    \Bigip{\Pi^x_a \Pi^y_b}{u_{j,\varepsilon} u_{i,\varepsilon}^{\ast}},
  \end{equation}
  and therefore
  \begin{equation}
    K(a,b|x,y) = \tr_{\H} \Bigl( \bigl(\I\otimes \Pi^x_a \Pi^y_b\bigr) u
    u^{\ast}\Bigr)
  \end{equation}
  for every choice of $x\in X$, $y\in Y$, $a\in A$, and $b\in B$.
\end{proof}

%-----------------------------------------------------------------------------%
% section: Monogamy-of-entanglement games
%-----------------------------------------------------------------------------%

\section{Monogamy-of-entanglement games}
\label{sec:monogamy-of-entanglement-games}

As suggested in the introduction, a monogamy-of-entanglement game is specified
by a pair $G = (\pi,R)$ where $\pi:X\rightarrow[0,1]$ is a probability vector
defined over a finite, nonempty set $X$ and $R$ is a function of the form 
$R:A\times X \rightarrow\Pos(\complex^m)$ satisfying 
\begin{equation}
  \sum_{a\in A} R(a|x) = \I
\end{equation}
for every $x\in X$, where $A$ is a finite and nonempty set.
The function $R$ specifies a collection of measurements, one for each choice of
$x\in X$, each having outcomes in $A$.

Recall that in a monogamy-of-entanglement game, Alice and Bob prepare a state,
and then share it with the referee. 
The referee randomly selects a single question $x\in X$, performs a 
measurement $\{R(a|x)\,:\,a\in A\}$ on its portion of the shared state,  
and then sends $x$ to both Alice and Bob. 
The game is won if and only if the responses that Alice and Bob give agree with
the outcome of the referee's measurement.

Because Alice and Bob only win when their output is the same, the optimal
winning probability for an entangled strategy making use of a specific choice
of measurements $\{A^x_a\}$ and $\{B^y_b\}$ for Alice and Bob is given by
\begin{equation}
  \Biggnorm{\sum_{x\in X} \pi(x) \sum_{a\in A} 
    R(a|x) \otimes A^x_a \otimes B^x_a}.
\end{equation}
The unentangled value of a monogamy-of-entanglement game may be expressed as
\begin{equation} \label{unentangled_value}
  \omega(G) = \max_{f:X\rightarrow A}
  \Bignorm{\sum_{x\in X} \pi(x) R(f(x)|x)}.
\end{equation}

As an example of a monogamy-of-entanglement game, we consider the BB84 monogamy game, which was
also introduced in~\cite{Tomamichel2013}.

\begin{example}[BB84 monogamy game]
  Let $m = 2$, let $X = A = \{0,1\}$, and define
  \begin{equation}
    R(0|0) = \ket{0}\bra{0}, \quad
    R(1|0) = \ket{1}\bra{1}, \quad
    R(0|1) = \ket{+}\bra{+}, \quad\text{and}\quad
    R(1|1) = \ket{-}\bra{-}.
  \end{equation}
  Also define $\pi(0) = \pi(1) = 1/2$, and define
  the BB84 monogamy-of-entanglement game $G_{\BB84} = (\pi,R)$.
  It was observed in \cite{Tomamichel2013} that
  \begin{equation}
    \omega(G_{\BB84}) = \omega^{\ast}(G_{\BB84}) = \cos^2(\pi/8).
  \end{equation}
\end{example}

\subsubsection*{Entangled versus unentangled strategies for
  monogamy-of-entanglement games}

The phenomenon that entanglement does not help in the BB84
monogamy-of-entanglement game is not limited to that game.
We show that for any monogamy-of-entanglement game $G$ for which $\abs{X} = 2$,
it must hold that $\omega(G) = \omega^{\ast}(G)$.

\begin{theorem} 
  \label{thm:classical-equal-quantum}
  Let $G$ be any monogamy-of-entanglement game for which it holds that the
  question set $X$ satisfies $\abs{X} = 2$.
  It holds that
  \begin{equation}
    \omega(G) = \omega^{\ast}(G).
  \end{equation}
\end{theorem}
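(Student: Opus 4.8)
The inequality $\omega(G) \le \omega^{\ast}(G)$ is immediate, since every unentangled strategy is in particular a standard quantum strategy; the content of the theorem is the reverse inequality $\omega^{\ast}(G) \le \omega(G)$. Write $X = \{0,1\}$ and abbreviate $\pi_x = \pi(x)$. For any standard quantum strategy---which by Naimark's theorem we may take to use projective measurements $\{A^x_a\}$, $\{B^x_b\}$ and a pure state $\rho = vv^{\ast}$ on $\R\otimes\A\otimes\B$---the winning probability equals $\langle v | \pi_0 Q_0 + \pi_1 Q_1 | v\rangle$, where $Q_x = \sum_{a} R(a|x)\otimes A^x_a\otimes B^x_a$, so that $\omega^{\ast}(G)$ is the supremum of $\|\pi_0 Q_0 + \pi_1 Q_1\|$ over all strategies. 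We may assume $\pi_0,\pi_1 \in (0,1)$, since otherwise the game effectively has a single question, and then the block-diagonal structure of $Q_x$ gives $\omega^{\ast}(G) = \max_a \|R(a|x)\| = \omega(G)$ directly.

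For the reverse inequality the plan is to bound $\|\pi_0 Q_0 + \pi_1 Q_1\|$ by a quantity that depends only on the referee's measurements, generalizing the operator-norm estimates of Tomamichel, Fehr, Kaniewski, and Wehner for the BB84 game. Each $Q_x$ factors as $Q_x = M_x^{\ast} M_x = M_x M_x^{\ast}$ with $M_x = \sum_a \sqrt{R(a|x)}\otimes A^x_a\otimes B^x_a$ Hermitian, and the per-strategy value $\langle v | \pi_0 Q_0 + \pi_1 Q_1 | v\rangle$ equals $\|Nv\|^2$ for the block-column operator $N = \bigl(\sqrt{\pi_0}\,M_0 ;\ \sqrt{\pi_1}\,M_1\bigr)$; hence it is at most $\|N\|^2 = \|NN^{\ast}\|$, where
\[
  NN^{\ast} = \begin{pmatrix} \pi_0 Q_0 & \sqrt{\pi_0\pi_1}\,M_0 M_1 \\ \sqrt{\pi_0\pi_1}\,M_1 M_0 & \pi_1 Q_1\end{pmatrix}.
\]
Bounding the diagonal blocks by $\|Q_x\| \le \max_a \|R(a|x)\|$ and, crucially, bounding the off-diagonal block by a referee-only quantity---this is where the product form $A^x_a\otimes B^x_b$ of the players' measurements, together with the orthogonality of $\{A^x_a\otimes B^x_b\}_a$, enforces monogamy and keeps $M_0 v$ and $M_1 v$ from being simultaneously large and aligned---should reduce $\|NN^{\ast}\|$ to the largest eigenvalue of a small matrix assembled from the numbers $\|R(a|x)\|$ and $\|\sqrt{R(a|0)}\sqrt{R(b|1)}\|$.

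It then remains to check that this referee-only bound is no larger than $\omega(G)$. Since $\omega(G) = \max_{f,g}\|\pi_0 R(f(0)|0) + \pi_1 R(g(0)|1)\| = \max_{a_0,a_1}\|\pi_0 R(a_0|0) + \pi_1 R(a_1|1)\|$, and each term $\|\pi_0 R(a_0|0) + \pi_1 R(a_1|1)\|$ is itself the norm of a $2\times2$ block operator $\tilde N\tilde N^{\ast}$ with $\tilde N = \bigl(\sqrt{\pi_0}\sqrt{R(a_0|0)} ;\ \sqrt{\pi_1}\sqrt{R(a_1|1)}\bigr)$, matching the two bounds amounts to showing that the extremal choice of $(a_0,a_1)$ on the unentangled side saturates the spectral estimate on the quantum side---equivalently, that the optimal weight vector in the small eigenvalue problem may be taken to be supported on a single pair $(a_0,a_1)$. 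I expect this matching step to be the main obstacle: it is precisely where the hypothesis $|X| = 2$ enters (for $|X| \ge 4$ the statement fails, as the paper's later example shows), and it must accommodate referee measurements $R(\cdot|x)$ that are arbitrary POVMs rather than projective. Once the matching is in place, combining it with the easy inequality yields $\omega^{\ast}(G) = \omega(G)$, and the argument also displays an optimal unentangled strategy explicitly: prepare the top eigenvector of $\pi_0 R(a_0|0) + \pi_1 R(a_1|1)$ for the optimal pair, then answer $a_0$ on question $0$ and $a_1$ on question $1$.
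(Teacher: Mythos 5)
Your setup (reduce to projective strategies, bound $\norm{\pi_0 Q_0 + \pi_1 Q_1}$) is fine, but the route you propose has a genuine gap, and you correctly sense where it is: the ``matching step'' cannot be carried out as described. Any TFKW/Cauchy--Schwarz-style estimate of the kind you sketch---bounding $\norm{NN^{\ast}}$ by a quantity assembled from the scalars $\norm{R(a|x)}$ and $\bignorm{\sqrt{R(a|0)}\sqrt{R(b|1)}}$---yields at best the $n=1$ case of Theorem~\ref{thm:upper-parallel-rep}, namely $\tfrac12+\tfrac12\max_{a,b}\bignorm{\sqrt{R(a|0)}\sqrt{R(b|1)}}$. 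That quantity equals $\omega(G)$ when the $R(a|x)$ are projections (this is exactly Lemma~\ref{lem:spectralnorm-projectors} and Proposition~\ref{thm:strong-parallel-rep-classical-equal-quantum}), but for general POVMs it is strictly larger than $\omega(G)$: for $\abs{A}=2$ and $R(a|x)=\I/2$ for all $a,x$ one has $\omega(G)=\omega^{\ast}(G)=1/2$ while $\tfrac12+\tfrac12\sqrt{c(G)}=3/4$. More fundamentally, $\omega(G)=\max_{a,b}\norm{\pi_0 R(a|0)+\pi_1 R(b|1)}$ is not a function of those scalars alone, so no ``small matrix'' built from them can have $\omega(G)$ as its top eigenvalue in general; and the off-diagonal block $M_0M_1$ contains cross terms $A^0_aA^1_{a'}\otimes B^0_aB^1_{a'}$ that are not controlled by referee-only data without additional ideas. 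Also note the theorem does not assume $\pi$ uniform, which your $c(G)$-based target silently does.

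The paper's proof sidesteps all of this with a two-line operator inequality. Since $0\le B^0_a\le\I$ and $0\le A^1_b\le\I$,
\begin{equation}
  \pi_0\sum_{a} R(a|0)\otimes A^0_a\otimes B^0_a+\pi_1\sum_{b} R(b|1)\otimes A^1_b\otimes B^1_b
  \;\le\;
  \pi_0\sum_{a} R(a|0)\otimes A^0_a\otimes\I+\pi_1\sum_{b} R(b|1)\otimes\I\otimes B^1_b,
\end{equation}
and the operator norm is monotone on positive semidefinite operators. The right-hand side equals $\sum_{a,b}\bigl(\pi_0R(a|0)+\pi_1R(b|1)\bigr)\otimes A^0_a\otimes B^1_b$, and since $\{A^0_a\otimes B^1_b\}$ is a family of pairwise orthogonal projections, its norm is $\max_{a,b}\norm{\pi_0R(a|0)+\pi_1R(b|1)}=\omega(G)$. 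This is precisely where $\abs{X}=2$ enters: one hands question $0$ entirely to Alice and question $1$ entirely to Bob, so the two surviving measurements live on different tensor factors and commute (this is the operational interpretation the paper attributes to Vidick, and it is the explicit unentangled strategy you were looking for). If you want to salvage your write-up, replace the $NN^{\ast}$ machinery with this ``drop one player's measurement per question'' step; that is the missing idea.
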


\begin{proof}
  It is evident that $\omega(G) \leq \omega^{\ast}(G)$, as this is so for every
  monogamy-of-entanglement game, so it remains to prove the reverse inequality.

  Assume without loss of generality that $X = \{0,1\}$, assume that
  $G = (\pi,R)$ for $\pi(0) = \lambda$ and $\pi(1) = 1 - \lambda$.
  Consider any choice of projective measurements
  \begin{equation}
    \label{eq:Alice's-measurements}
    \bigl\{A^0_a\,:a\in A\bigr\} \quad\text{and}\quad
    \bigl\{A^1_a\,:a\in A\bigr\}
  \end{equation}
  on $\A$ for Alice and 
  \begin{equation}
    \label{eq:Bob's-measurements}
    \bigl\{B^0_a\,:a\in A\bigr\} \quad\text{and}\quad
    \bigl\{B^1_a\,:a\in A\bigr\}
  \end{equation}
  on $\B$ for Bob.
  The winning probability for a strategy using these measurements is given by
  \begin{equation}
    \label{eq:value-achieved}
    \Biggnorm{\lambda \sum_{a\in A}
      R(a|0) \otimes A^0_a \otimes B^0_a
      + (1-\lambda) \sum_{a\in A} R(a|1) \otimes A^1_a \otimes B^1_a}
  \end{equation}
  for an optimal choice of the initial state.
  For any choice of positive semidefinite operators $P \leq Q$ it holds that
  $\norm{P} \leq \norm{Q}$, from which it follows that
  \eqref{eq:value-achieved} is upper-bounded by
  \begin{equation}
    \label{eq:entangled-monogamy-upper-bound}
    \begin{aligned}
      \Biggnorm{\lambda \sum_{a\in A}
        R(a|0) \otimes A^0_a \otimes \I
        + (1-\lambda) \sum_{b\in A} R(b|1) \otimes \I \otimes B^1_b}
      \hspace{-6cm}\\
      & = \Biggnorm{\sum_{a\in A}\sum_{b\in B}
        \bigl( \lambda R(a|0) + (1 - \lambda) R(b|1)\bigr)
        \otimes A^0_a \otimes B^1_b}\\
      & = \max_{a,b\in A} 
      \bignorm{\lambda R(a|0) + (1 - \lambda) R(b|1)}.
    \end{aligned}
  \end{equation}
  The second equality follows from the fact that
  $\{A^0_a\otimes B^1_b\,:\, a,b\in A\}$ is a collection of pairwise orthogonal
  projection operators.
  The final expression of \eqref{eq:entangled-monogamy-upper-bound} is
  equal to the unentangled value $\omega(G)$ of $G$.
  Because the projective measurements \eqref{eq:Alice's-measurements} and
  \eqref{eq:Bob's-measurements} were chosen arbitrarily, and every entangled
  strategy is equivalent to one in which Alice and Bob use projective
  measurements, it follows that $\omega^{\ast}(G) \leq \omega(G)$ as required.
\end{proof}

An operational interpretation of this result was suggested to us by Thomas
Vidick. 
To convert a quantum strategy into a classical strategy, we can assign
one question to each player (say $0$ to Alice and $1$ to Bob). 
Even before the referee asks the question, Alice can measure her part of the state
with $\{A^0_a\}$ and Bob with $\{B^1_a\}$. 
They exchange their answers and then are separated. 
If the referee asks the question $0$ (or $1$) then they answer according to
Alice (respectively Bob).

It turns out that monogamy-of-entanglement games for which there are more than
two questions can exhibit an advantage of entangled over unentangled
strategies.
The following example describes such a game.

\begin{example}
  Let $\zeta = e^{\frac{2 \pi i}{3}}$ and consider the following four
  mutually unbiased bases:
  \begin{equation}\label{eq:MUB43}
    \begin{aligned}
      \B_0 &= \left\{ \ket{0},\: \ket{1},\: \ket{2} \right\}, \\
      \B_1 &= \left\{ \frac{\ket{0} + \ket{1} + \ket{2}}{\sqrt{3}},\:
      \frac{\ket{0} + \zeta^2 \ket{1} + \zeta \ket{2}}{\sqrt{3}},\:
      \frac{\ket{0} + \zeta\ket{1} + \zeta^2 \ket{2}}{\sqrt{3}} \right\}, \\
      \B_2 &= \left\{ \frac{\ket{0} + \ket{1} + \zeta\ket{2}}{\sqrt{3}},\:
      \frac{\ket{0} + \zeta^2 \ket{1} + \zeta^2 \ket{2}}{\sqrt{3}},\:
      \frac{\ket{0} + \zeta \ket{1} + \ket{2}}{\sqrt{3}} \right\}, \\
      \B_3 &= \left\{ \frac{\ket{0} + \ket{1} + \zeta^2 \ket{2}}{\sqrt{3}},\:
      \frac{\ket{0} + \zeta^2 \ket{1} + \ket{2}}{\sqrt{3}},\:
      \frac{\ket{0} + \zeta \ket{1} + \zeta \ket{2}}{\sqrt{3}} \right\}.
    \end{aligned}
  \end{equation}
  Define a monogamy-of-entanglement game $G = (\pi,R)$ so that
  \begin{equation}
    \pi(0) = \pi(1) = \pi(2) = \pi(3) = \frac{1}{4}
  \end{equation}
  and $R$ is such that
  \begin{equation}
    \bigl\{ R(0|x),\:R(1|x),\:R(2|x)\bigr\}
  \end{equation}
  represents a measurement with respect to the basis $\B_x$, for each
  $x\in \{0,1,2,3\}$.
  An exhaustive search over all unentangled strategies reveals that
  \begin{equation}
    \omega(G) = \frac{3+\sqrt{5}}{8} \approx 0.6545,
  \end{equation}
  while a computer search over quantum strategies has revealed that
  \begin{equation}
    \omega^{\ast}(G) \geq 0.660986,
  \end{equation}
  which is strictly larger than the unentangled value of this game.
  (This strategy is available for download from the software
  repository~\cite{Johnston2015b}.
  We do not know if this strategy is optimal---the first level of the extended
  NPA hierarchy of Section~\ref{sec:upper-bound-the-npa-referee-hierarchy}
  gives an upper bound of $2/3$ on the commuting measurement value of this
  game.)
\end{example}

\subsubsection*{Parallel repetition of monogamy-of-entanglement games}
\label{sec:parallel-repetition-of-monogamy-games}

Tomamichel et al.~\cite{Tomamichel2013} proved the following upper bound on the
value of monogamy-of-entanglement games when they are repeated in parallel,
under the assumption that the distribution $\pi$ is uniform over the question
set $X$.
They also proved that this bound is tight for the BB84 monogamy-of-entanglement
game.

\begin{theorem}[Tomamichel, Fehr, Kaniewski, and Wehner]
  \label{thm:upper-parallel-rep}
  Let $G = (\pi,R)$ be a monogamy-of-entanglement game for which $\pi$ is
  uniform over $X$, define
  \begin{equation}
    c(G) = \max_{\substack{x,y \in X \\ x \not= y}} \max_{a,b\in A}
    \biggnorm{\sqrt{R(a|x)} \sqrt{R(b|y)}}^2,
  \end{equation}
  and let $G^{n}$ denote the game $G$ played $n$ times in parallel.
  It holds that
  \begin{equation}
    \omega^{\ast}(G^n) \leq
    \left( \frac{1}{\abs{X}} + \frac{\abs{X} - 1}{\abs{X}}
    \sqrt{c(G)} \right)^n.
  \end{equation}
\end{theorem}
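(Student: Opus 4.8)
The plan is to follow the strategy of Tomamichel et al.\ for the BB84 game, adapting their operator-norm bound to an arbitrary monogamy-of-entanglement game. Fix a standard quantum strategy for $G^n$ given by measurements $\{A^{\mathbf x}_{\mathbf a}\}$ for Alice and $\{B^{\mathbf y}_{\mathbf b}\}$ for Bob on spaces $\A$ and $\B$, where now questions and answers are $n$-tuples, $\mathbf x = (x_1,\ldots,x_n)\in X^n$ and $\mathbf a = (a_1,\ldots,a_n)\in A^n$, and the referee's measurement operator for the repeated game is the tensor product $R(\mathbf a|\mathbf x) = R(a_1|x_1)\otimes\cdots\otimes R(a_n|x_n)$. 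The winning probability is the operator norm
\begin{equation}
  \Biggnorm{\frac{1}{\abs{X}^n}\sum_{\mathbf x\in X^n}\sum_{\mathbf a\in A^n}
    R(\mathbf a|\mathbf x)\otimes A^{\mathbf x}_{\mathbf a}\otimes B^{\mathbf x}_{\mathbf a}},
\end{equation}
and the goal is to bound this by $(\abs{X}^{-1}+\abs{X}^{-1}(\abs{X}-1)\sqrt{c(G)})^n$.

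The key step is the norm estimate from~\cite{Tomamichel2013}: writing $P_{\mathbf x} = \sum_{\mathbf a} R(\mathbf a|\mathbf x)\otimes A^{\mathbf x}_{\mathbf a}\otimes B^{\mathbf x}_{\mathbf a}$, each $P_{\mathbf x}$ is a positive semidefinite operator bounded above by the identity (since it arises from measurement operators contracted against a product state), so $\bignorm{\sum_{\mathbf x} P_{\mathbf x}} \leq \abs{X}^n$ trivially, but one can do much better by bounding $\bignorm{\sum_{\mathbf x}P_{\mathbf x}}^2 = \bignorm{\sum_{\mathbf x,\mathbf y}P_{\mathbf x}P_{\mathbf y}}$ and using the Cauchy--Schwarz-type inequality $\bignorm{\sum_{\mathbf x}P_{\mathbf x}} \leq \sum_{\mathbf x}\max_{\mathbf y}\bignorm{P_{\mathbf x}P_{\mathbf y}}^{1/2}$ (this is the operator-norm analogue of a union bound; it follows by viewing the $P_{\mathbf x}$ as blocks and applying the Schur test / a result on norms of sums of positive operators). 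For $\mathbf x = \mathbf y$ one gets $\bignorm{P_{\mathbf x}P_{\mathbf x}}^{1/2}\le 1$. For $\mathbf x\neq\mathbf y$, because the $A^{\mathbf x}_{\mathbf a}$ form a measurement and similarly for Bob, the cross term factorizes and one is left with $\bignorm{P_{\mathbf x}P_{\mathbf y}}\le \max_{\mathbf a,\mathbf b}\bignorm{\sqrt{R(\mathbf a|\mathbf x)}\sqrt{R(\mathbf b|\mathbf y)}}^2$, which by the product structure of $R$ equals $\prod_{i:\,x_i\neq y_i}\max_{a,b}\bignorm{\sqrt{R(a|x_i)}\sqrt{R(b|y_i)}}^2 \le c(G)^{d(\mathbf x,\mathbf y)}$, where $d$ is Hamming distance (coordinates with $x_i = y_i$ contribute a factor of at most $1$). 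Summing, $\bignorm{\sum_{\mathbf x}P_{\mathbf x}} \le \sum_{\mathbf x\in X^n}\max_{\mathbf y}c(G)^{d(\mathbf x,\mathbf y)/2}$; for each $\mathbf x$ the worst $\mathbf y$ differs in every coordinate (assuming $c(G)\le 1$, which holds since each $\sqrt{R(a|x)}$ has norm at most $1$), giving the bound $\abs{X}^n c(G)^{n/2}$ — but this is too crude. The correct accounting keeps $\mathbf y$ free per coordinate: one shows $\sum_{\mathbf x}\max_{\mathbf y}c(G)^{d(\mathbf x,\mathbf y)/2}$ factorizes across the $n$ coordinates into $\bigl(\sum_{x\in X}\max_{y}c(G)^{[x\neq y]/2}\bigr)^n = (1 + (\abs{X}-1)\sqrt{c(G)})^n$, since for a fixed coordinate value $x$ there is one choice $y = x$ contributing $1$ and $\abs{X}-1$ choices contributing $\sqrt{c(G)}$. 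Dividing by the normalization $\abs{X}^n$ yields exactly the claimed bound.

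I expect the main obstacle to be making the factorization in the last step rigorous: the bound $\bignorm{\sum_{\mathbf x}P_{\mathbf x}}\le\sum_{\mathbf x}\max_{\mathbf y}\bignorm{P_{\mathbf x}P_{\mathbf y}}^{1/2}$ must be invoked correctly (this is where Tomamichel et al.\ are careful — it is not literally the triangle inequality but a lemma about operator norms of sums of products of positive operators), and then the per-coordinate decomposition of $\max_{\mathbf y}c(G)^{d(\mathbf x,\mathbf y)/2}$ must be justified by observing that the maximum over $\mathbf y\in X^n$ of a product over coordinates is the product of the coordinatewise maxima. A secondary point requiring care is the claim $c(G)\le 1$, needed so that coordinates with $x_i = y_i$ can be dropped from the product without increasing it; this follows because $0\le R(a|x)\le\I$ forces $\bignorm{\sqrt{R(a|x)}}\le 1$, hence $\bignorm{\sqrt{R(a|x)}\sqrt{R(b|y)}}\le 1$. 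Everything else is the routine bookkeeping of unrolling the tensor-product structure of the parallel-repeated game, which I would not spell out in full.
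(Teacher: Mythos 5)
First, a point of comparison: the paper itself does not prove this theorem---it is imported verbatim from Tomamichel, Fehr, Kaniewski, and Wehner---so your proposal is measured against the argument in that reference. Your high-level plan matches theirs, and several ingredients are correctly in place: the reduction of $\omega^{\ast}(G^n)$ to the operator norm of $\abs{X}^{-n}\sum_{\mathbf{x}}P_{\mathbf{x}}$, the cross-term estimate $\bignorm{P_{\mathbf{x}}P_{\mathbf{y}}}\leq \max_{\mathbf{a},\mathbf{b}}\bignorm{\sqrt{R(\mathbf{a}|\mathbf{x})}\sqrt{R(\mathbf{b}|\mathbf{y})}}^2\leq c(G)^{d(\mathbf{x},\mathbf{y})}$, and the observation that $c(G)\leq 1$. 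However, the central inequality you invoke is not the one that makes the proof work, and the defect is not merely a matter of rigor. The bound
\begin{equation*}
  \Bignorm{\sum_{\mathbf{x}}P_{\mathbf{x}}} \leq \sum_{\mathbf{x}}\max_{\mathbf{y}}\bignorm{P_{\mathbf{x}}P_{\mathbf{y}}}^{1/2}
\end{equation*}
is true but vacuous: the maximum over $\mathbf{y}$ includes $\mathbf{y}=\mathbf{x}$, so each summand is at least $\bignorm{P_{\mathbf{x}}^2}^{1/2}=\bignorm{P_{\mathbf{x}}}$, and the right-hand side is never better than the triangle inequality, yielding only $\abs{X}^n$. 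Correspondingly, the ``factorization'' you then assert, $\sum_{\mathbf{x}}\max_{\mathbf{y}}c(G)^{d(\mathbf{x},\mathbf{y})/2}=(1+(\abs{X}-1)\sqrt{c(G)})^n$, is false as written: the left-hand side equals $\abs{X}^n$, again because the maximum is attained at $\mathbf{y}=\mathbf{x}$. You flag this step as the one requiring care, but the fix you sketch does not repair it.

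The missing idea is the combinatorial structure in Lemma~2 of Tomamichel et al.: one fixes a family of $N=\abs{X}^n$ \emph{mutually orthogonal permutations} $\{\pi_{\mathbf{k}}\}_{\mathbf{k}\in X^n}$ of the index set (concretely, identify $X$ with $\integer_{\abs{X}}$ and set $\pi_{\mathbf{k}}(\mathbf{x})=\mathbf{x}+\mathbf{k}$ coordinatewise), and the lemma reads
\begin{equation*}
  \Bignorm{\sum_{\mathbf{x}}P_{\mathbf{x}}} \leq \sum_{\mathbf{k}}\max_{\mathbf{x}}\bignorm{\sqrt{P_{\mathbf{x}}}\sqrt{P_{\pi_{\mathbf{k}}(\mathbf{x})}}}.
\end{equation*}
The sum runs over the permutations and the maximum over the index---the opposite of your arrangement. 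Exactly one term ($\mathbf{k}=0$) equals $1$; for $\mathbf{k}\neq 0$ the pair $(\mathbf{x},\mathbf{x}+\mathbf{k})$ differs in precisely the coordinates where $\mathbf{k}$ is nonzero, so that term is at most $c(G)^{w(\mathbf{k})/2}$ with $w$ the Hamming weight, and the correct factorization is $\sum_{\mathbf{k}\in X^n}c(G)^{w(\mathbf{k})/2}=(1+(\abs{X}-1)\sqrt{c(G)})^n$---a sum over shifts, not a sum of maxima. Without the orthogonal-permutation lemma (whose proof is itself a nontrivial Cauchy--Schwarz argument on the decomposition $\sum_i A_i=\sum_i\sqrt{A_i}\sqrt{A_i}$), there is no route from your setup to the stated bound, so the proposal as written does not establish the theorem.
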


We prove that this bound is, in fact, tight for all monogamy-of-entanglement
games for which $\abs{X} = 2$, the questions are chosen uniformly,
and the referee's measurements are projective.
This is a consequence of the following proposition.

\begin{prop}
  \label{thm:strong-parallel-rep-classical-equal-quantum}
  Let $G = (\pi,R)$ be a monogamy-of-entanglement game for which $X = \{0,1\}$,
  $\pi$ is uniform over $X$, and $R(a|x)$ is a projection operator for each
  $x\in X$ and $a\in A$.
  It holds that
  \begin{equation}
    \omega(G) = 
    \frac{1}{2}+\frac{1}{2} \max_{a,b\in A}\biggnorm{R(a|0)\,R(b|1)}.
  \end{equation}
\end{prop}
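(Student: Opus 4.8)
The plan is to reduce the proposition to a clean operator-norm identity for pairs of projections and then prove that identity by elementary means. First I would invoke the formula~\eqref{unentangled_value} for the unentangled value: since $X=\{0,1\}$ and $\pi(0)=\pi(1)=1/2$, a function $f:X\to A$ is just a pair $(a,b)=(f(0),f(1))\in A\times A$, so
\[
  \omega(G)=\max_{a,b\in A}\Bignorm{\tfrac12 R(a|0)+\tfrac12 R(b|1)}
  =\tfrac12\max_{a,b\in A}\bignorm{R(a|0)+R(b|1)}.
\]
Hence it suffices to show that $\max_{a,b}\norm{R(a|0)+R(b|1)}=1+\max_{a,b}\norm{R(a|0)R(b|1)}$, and this will follow, essentially term by term, from the identity $\norm{P+Q}=1+\norm{PQ}$ valid for any two projection operators $P,Q$ on $\complex^m$ that are not both zero.

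To prove $\norm{P+Q}=1+\norm{PQ}$, I would first record the geometric fact that, writing $c=\norm{PQ}$, one has $c=\sup\{\,\abs{\ip{u}{w}}:u\in\im P,\ w\in\im Q,\ \norm{u}=\norm{w}=1\,\}$, with the supremum approachable by pairs for which $\ip{u}{w}$ is real and nonnegative; the bound ``$\leq$'' is immediate from $\ip{u}{w}=\ip{u}{PQw}$ together with $Pu=u$, $Qw=w$, and ``$\geq$'' follows by normalising $Qw_0$ for $w_0$ a near-optimal vector of $\norm{PQw}$. For the upper bound $\norm{P+Q}\leq 1+c$: since $P+Q$ is positive semidefinite, its norm is its largest eigenvalue $\lambda$, attained by a unit eigenvector $v$, and $\lambda\geq 1$ because $Q\succeq 0$ forces $P\preceq P+Q$. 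Expanding $(P+Q)^2=(P+Q)+PQ+QP$ and evaluating in the state $v$ gives $\ip{v}{(PQ+QP)v}=\lambda^2-\lambda$; on the other hand this cross term equals $2\operatorname{Re}\ip{Pv}{Qv}\leq 2\norm{Pv}\norm{Qv}\,c$ by the subspace-overlap bound applied to the normalisations of $Pv$ and $Qv$, and $2\norm{Pv}\norm{Qv}\leq\norm{Pv}^2+\norm{Qv}^2=\ip{v}{(P+Q)v}=\lambda$. Combining, $\lambda^2-\lambda\leq c\lambda$, hence $\lambda\leq 1+c$. For the matching lower bound: if $c=0$ then $\norm{P+Q}\geq\norm{P}=1$; if $c>0$, pick for small $\varepsilon>0$ unit vectors $u\in\im P$, $w\in\im Q$ with $\gamma:=\ip{u}{w}$ real and $c-\varepsilon<\gamma$, and test with $v=u+w$: using $\norm{Pv}^2\geq\abs{\ip{Pu}{v}}^2=\abs{\ip{u}{v}}^2=(1+\gamma)^2$ (and likewise for $Q$) one gets $\ip{v}{(P+Q)v}\geq 2(1+\gamma)^2$ while $\norm{v}^2=2(1+\gamma)$, so $\norm{P+Q}\geq 1+\gamma$; letting $\varepsilon\to 0$ finishes it.

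The hard part is the upper bound $\norm{P+Q}\leq 1+\norm{PQ}$: the obvious estimates (the triangle inequality, or $\norm{P+Q}\leq 1+\norm{P+Q-\I}$) are too lossy, and one could instead invoke the Jordan / ``two projections'' decomposition into invariant blocks of dimension at most two, but the bookkeeping of the degenerate blocks is unpleasant. The argument above sidesteps that by extracting a single quadratic inequality in $\lambda=\norm{P+Q}$ from the top eigenvector and bounding $\ip{v}{(PQ+QP)v}$ using \emph{simultaneously} that $2\norm{Pv}\norm{Qv}$ is at most $\ip{v}{(P+Q)v}=\lambda$ and that the overlap of unit vectors drawn from $\im P$ and $\im Q$ is at most $\norm{PQ}$. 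Once the projection identity is in hand, the proposition assembles routinely: applying it with $P=R(a|0)$, $Q=R(b|1)$ to the pair maximising $\norm{R(a|0)R(b|1)}$ gives ``$\geq$'' in the displayed equality of maxima, while applying it term by term gives ``$\leq$'', where the possibility that some $R(a|0)$ or $R(b|1)$ vanishes is harmless---such a term has norm at most $1$ and cannot exceed the maximum, since $\sum_a R(a|0)=\I$ forces at least one $R(a|0)$ to be a nonzero projection of norm $1$. Dividing by $2$ then yields $\omega(G)=\tfrac12+\tfrac12\max_{a,b}\norm{R(a|0)R(b|1)}$.
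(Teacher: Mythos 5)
Your argument is correct, and it follows the same overall route as the paper: both reduce the proposition, via the expression \eqref{unentangled_value} for the unentangled value, to the single identity $\norm{\Pi_0+\Pi_1}=1+\norm{\Pi_0\Pi_1}$ for projections (Lemma~\ref{lem:spectralnorm-projectors} in the paper). The difference lies entirely in how that identity is proved. The paper's proof is a chain of variational identities: it writes $\norm{\Pi_0+\Pi_1}=\max_v\bigl(\norm{\Pi_0 v}^2+\norm{\Pi_1 v}^2\bigr)$, replaces each term by a maximum of $\abs{\ip{u_i}{v}}^2$ over unit vectors $u_i$ in the respective images, exchanges the order of maximization, and then invokes the exact eigenvalue formula $\norm{u_0u_0^{\ast}+u_1u_1^{\ast}}=1+\abs{\ip{u_0}{u_1}}$ for a sum of two rank-one projections, obtaining both inequalities at once. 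You instead prove the two directions separately: the upper bound by extracting the quadratic inequality $\lambda^2-\lambda\leq c\lambda$ from the top eigenvector of $P+Q$ via $(P+Q)^2=(P+Q)+PQ+QP$ together with the subspace-overlap characterization of $\norm{PQ}$, and the lower bound by testing with $u+w$ for nearly extremal unit vectors in the two images. Both arguments are elementary and of comparable length; the paper's is slicker but leans on the rank-one eigenvalue formula and an interchange of maxima, while yours isolates the genuinely nontrivial direction $\norm{P+Q}\leq 1+\norm{PQ}$ and is somewhat more self-contained. You are also more explicit than the paper about the degenerate cases in which some $R(a|x)=0$ (the paper states its lemma only for nonzero projections and passes over this point), and your observation that $\sum_a R(a|0)=\I$ forces at least one nonzero projection correctly disposes of them.
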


\noindent
The proof of this proposition makes use of the following lemma.

\begin{lemma} \label{lem:spectralnorm-projectors} 
  Let $\Pi_0$ and $\Pi_1$ be nonzero projection operators on $\complex^n$.
  It holds that
  \begin{equation}
    \label{lem:spectralnorm-projectors-equation}
    \norm{\Pi_0 + \Pi_1} = 1 + \norm{\Pi_0 \Pi_1}.
  \end{equation}
\end{lemma}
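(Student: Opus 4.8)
The plan is to reduce the computation of $\norm{\Pi_0+\Pi_1}$ to a two-dimensional problem by invoking Jordan's lemma on pairs of projections. Concretely, I would first observe that $\complex^n$ decomposes (relative to the pair $\Pi_0,\Pi_1$) into an orthogonal direct sum of subspaces, each of dimension $1$ or $2$, that are invariant under both $\Pi_0$ and $\Pi_1$; on each one-dimensional summand each projection acts as $0$ or $1$, and on each two-dimensional summand the projections are rank-one projections onto lines meeting at some angle $\theta$. Since $\norm{\Pi_0+\Pi_1}$ is the maximum over these summands of the norm of the restricted operator, and similarly $\norm{\Pi_0\Pi_1}$ is the maximum over summands of the restricted product, it suffices to check the identity block by block. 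The one-dimensional blocks contribute values in $\{0,1,2\}$ for $\Pi_0+\Pi_1$ and $\{0,1\}$ for $\Pi_0\Pi_1$, and these are consistent with \eqref{lem:spectralnorm-projectors-equation} provided the $2$-block attains at least as large a value, which it does whenever both projections are nonzero — this is where the hypothesis that $\Pi_0,\Pi_1$ are \emph{nonzero} is used (it guarantees there is a block on which $\Pi_0\Pi_1\neq 0$, or else one of them dominates and the identity degenerates to the trivial $\norm{\Pi_i}=1+0$).

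For a single two-dimensional block, I would pick an orthonormal basis in which
\begin{equation}
  \Pi_0 = \begin{pmatrix} 1 & 0 \\ 0 & 0 \end{pmatrix},
  \qquad
  \Pi_1 = \begin{pmatrix} \cos^2\theta & \cos\theta\sin\theta \\
    \cos\theta\sin\theta & \sin^2\theta \end{pmatrix},
\end{equation}
so that $\Pi_0+\Pi_1$ is a $2\times 2$ positive semidefinite matrix with trace $1+\cos^2\theta+\sin^2\theta = 2$ and determinant $\sin^2\theta - \cos^2\theta\sin^2\theta = \sin^4\theta$. Its eigenvalues are $1\pm\cos\theta$, hence $\norm{\Pi_0+\Pi_1} = 1+\abs{\cos\theta}$. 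On the other hand $\Pi_0\Pi_1$ has rank one with $\norm{\Pi_0\Pi_1} = \abs{\cos\theta}$ (its only nonzero singular value is the cosine of the angle between the two lines). Thus $\norm{\Pi_0+\Pi_1} = 1+\norm{\Pi_0\Pi_1}$ on every such block, and also on the degenerate cases $\theta\in\{0,\pi/2\}$ that correspond to the one-dimensional situations. Taking the maximum over all blocks gives $\norm{\Pi_0+\Pi_1} = 1+\max_{\text{blocks}}\norm{\Pi_0\Pi_1|_{\text{block}}} = 1+\norm{\Pi_0\Pi_1}$.

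An alternative, perhaps cleaner, route avoids an explicit appeal to Jordan's lemma: one can argue directly that $\norm{\Pi_0+\Pi_1}^2 = \norm{(\Pi_0+\Pi_1)^2} = \norm{\Pi_0+\Pi_1+\Pi_0\Pi_1+\Pi_1\Pi_0}$ and relate this to $\norm{\Pi_0\Pi_1}$ via the identity $\norm{\Pi_0\Pi_1}^2 = \norm{\Pi_1\Pi_0\Pi_1}$, using that $\norm{\Pi_0+\Pi_1}\le 2$ always. I expect the main obstacle either way to be bookkeeping rather than conceptual: in the Jordan-lemma approach one must be careful that the \emph{supremum} of the block norms is attained and that the nonzero hypothesis rules out the pathological split where, say, $\Pi_1$ is entirely supported on the common invariant subspace where $\Pi_0$ acts as the identity; in the direct approach one must justify passing the spectral radius through the relevant polynomial identities. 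I would present the Jordan-lemma argument as the primary one since it makes the geometric content — the norm is controlled by the principal angle between the two ranges — completely transparent.
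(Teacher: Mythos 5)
Your argument is correct, but it takes a different route from the paper. You reduce to $2\times 2$ blocks via Jordan's lemma on pairs of projections and verify the identity block by block; the paper never invokes a structure theorem. Instead it writes $v^{\ast}(\Pi_0+\Pi_1)v = \norm{\Pi_0 v}^2 + \norm{\Pi_1 v}^2$, expresses each term as $\max_{u_i}\abs{\ip{u_i}{v}}^2$ over unit vectors $u_i$ in $\op{im}(\Pi_i)$, interchanges the order of maximization to reduce to the rank-one identity $\norm{u_0 u_0^{\ast}+u_1 u_1^{\ast}} = 1+\abs{\ip{u_0}{u_1}}$, and finishes with $\max_{u_0,u_1}\abs{\ip{u_0}{u_1}} = \norm{\Pi_0\Pi_1}$. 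Both proofs ultimately rest on the same geometric fact (the norm is governed by the principal angle between the ranges), but the paper's variational argument is self-contained and avoids the bookkeeping over degenerate blocks that you rightly worry about; your version makes the two-dimensional geometry explicit at the cost of importing Jordan's lemma and then having to check that the blockwise identity survives taking maxima (which it does, since nonzeroness of $\Pi_0$ and $\Pi_1$ guarantees some block with $\norm{(\Pi_0+\Pi_1)|_{\text{block}}}\geq 1$, so the all-zero blocks cannot be the argmax on either side). One small slip: in your $2\times 2$ computation the determinant of $\Pi_0+\Pi_1$ is $(1+\cos^2\theta)\sin^2\theta - \cos^2\theta\sin^2\theta = \sin^2\theta$, not $\sin^4\theta$; with trace $2$ this gives eigenvalues $1\pm\abs{\cos\theta}$, so your stated eigenvalues and the rest of the argument are unaffected.
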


\begin{proof}
  For every choice of unit vectors $u_0, u_1 \in \complex^n$, one has the
  formula
  \begin{equation}
    \bignorm{u_0 u_0^{\ast} + u_1 u_1^{\ast}} =
    1 + \bigabs{\bigip{u_0}{u_1}},
  \end{equation}
  which follows from the observation that the Hermitian operator
  $u_0 u_0^{\ast} + u_1 u_1^{\ast}$ has (at most) two nonzero eigenvalues
  $1 \pm \bigabs{\bigip{u_0}{u_1}}$.
  Letting $\S$, $\S_0$, and $\S_1$ denote the unit spheres in the spaces
  $\complex^n$, $\op{im}(\Pi_0)$, and $\op{im}(\Pi_1)$, respectively, one has
  \begin{equation}
    \begin{aligned}
      \bignorm{\Pi_0 + \Pi_1}
      & = \max\Bigl\{
      v^{\ast}(\Pi_0 + \Pi_1) v\,:\,v\in\S\Bigr\}\\
      & = \max\Bigl\{\norm{\Pi_0 v}^2 + \norm{\Pi_1 v}^2\,:\,v\in\S\Bigr\}\\
      & = \max\Bigl\{\bigabs{\bigip{u_0}{v}}^2 + \bigabs{\bigip{u_1}{v}}^2
      \,:\,v\in\S,\,u_0\in\S_0,\,u_1\in\S_1\Bigr\}\\
      & = \max\Bigl\{ v^{\ast}\bigl(u_0 u_0^{\ast} + u_1 u_1^{\ast}\bigr)v
      \,:\,v\in\S,\,u_0\in\S_0,\,u_1\in\S_1\Bigr\}\\
      & = \max\Bigl\{\bignorm{u_0 u_0^{\ast} + u_1 u_1^{\ast}} \,:\,
      u_0\in\S_0,\,u_1\in\S_1\Bigr\}\\
      & = \max\Bigl\{1 + \bigabs{\bigip{u_0}{u_1}} \,:\,
      u_0\in\S_0,\,u_1\in\S_1\Bigr\}\\
      & = 1 + \bignorm{\Pi_0 \Pi_1},
    \end{aligned}
  \end{equation}
  which proves the lemma.
\end{proof}

\begin{proof}[Proof of
    Proposition~\ref{thm:strong-parallel-rep-classical-equal-quantum}]
  We observe that the unentangled value of $G$ is given by
  \begin{equation}
    \omega(G) = \max_{a,b \in A}\,\biggnorm{\frac{R(a|0) + R(b|1)}{2}}
    = \frac{1}{2}+\frac{1}{2} \max_{a,b\in A}\biggnorm{R(a|0)\,R(b|1)}
  \end{equation}
  as claimed.
\end{proof}

The reason that the proposition just proved implies the tightness of the
bound in Theorem~\ref{thm:upper-parallel-rep} for a monogamy-of-entanglement
game of the type specified in
Proposition~\ref{thm:strong-parallel-rep-classical-equal-quantum} is that
Alice and Bob can simply play, $n$ times in parallel, an optimal strategy for
$G$.
This implies that
\begin{equation}
  \omega^{\ast}(G^n) \geq \omega(G^n)
  \geq 
  \Biggl(
  \frac{1}{2}+\frac{1}{2} \max_{a,b\in A}\biggnorm{R(a|0)\,R(b|1)}\Biggr)^n
  = \left( \frac{1}{2} + \frac{1}{2}\sqrt{c(G)} \right)^n,
\end{equation}
which matches the upper-bound of
Theorem~\ref{thm:upper-parallel-rep}.

%-----------------------------------------------------------------------------%
% Section: Discussion and open problems
%-----------------------------------------------------------------------------%
%\section{Conclusion} \label{sec:conclusion}
%
%We conclude with a few open questions.
%\begin{mylist}{\parindent}
%\item[1.]
%  We observed that a monogamy-of-entanglement game defined by a set of
%  four mutually unbiased bases in three dimensions allows Alice and Bob to
%  perform better if they adopt an entangled strategy instead of an unentangled
%  strategy.
%  This is the smallest example of a monogamy-of-entanglement game we were able
%  to find having this property.
%  Is there an example having fewer questions and/or answers?
%  Under what conditions does a monogamy-of-entanglement game based on mutually
%  unbiased bases admit an entangled over unentangled strategy advantage?
%
%\item[2.]
%  Is there a natural extension of
%  Proposition~\ref{thm:strong-parallel-rep-classical-equal-quantum} for
%  monogamy-of-entanglement games having nonuniform distributions of questions
%  and non-projective measurements?
%
%\item[3.]
%  It is an open question whether standard quantum strategies and commuting
%  operator strategies are equivalent for nonlocal games.
%  Does there exist an extended nonlocal game for which these two types of
%  strategies differ?
%\end{mylist}

%-----------------------------------------------------------------------------%
% Section: Acknowledgements
%-----------------------------------------------------------------------------%

\subsection*{Acknowledgments}

We thank Richard Cleve, Debbie Leung, Li Liu, Matt McKague, Jamie Sikora,
Marco Tomamichel, Thomas Vidick, and Elie Wolfe for helpful discussions, and we thank Daniel Cavalcanti, Tobias Fritz, and Marco Piani
  for bringing the connection between extended nonlocal games and multipartite
  quantum steering to our attention.
We also acknowledge Michael Grant and Stephen Boyd for their convex optimization
software CVX~\cite{Grant2008a}. 
RM is supported by the INSPIRE fellowship.
He would like to thank the Institute for Quantum Computing at the University of
Waterloo for its hospitality while contributing to this work.
VR is supported by NSERC and the US Army Research Office. 
JW is supported by NSERC. 

\bibliographystyle{alpha}
\bibliography{refs_jab}

\newcommand{\etalchar}[1]{$^{#1}$}
\begin{thebibliography}{BOGKW88}

\bibitem[BBT05]{Brassard2005}
Gilles Brassard, Anne Broadbent, and Alain Tapp.
\newblock Quantum pseudo-telepathy.
\newblock {\em Foundations of Physics}, 35(11):1877--1907, 2005.

\bibitem[Bel64]{Bell1964}
John Bell.
\newblock On the {E}instein-{P}odolsky-{R}osen paradox.
\newblock {\em Physics}, 1(3):195--200, 1964.

\bibitem[BFL91]{Babai1991}
L{\'a}szl{\'o} Babai, Lance Fortnow, and Carsten Lund.
\newblock Non-deterministic exponential time has two-prover interactive
  protocols.
\newblock {\em Computational {C}omplexity}, 1(1):3--40, 1991.

\bibitem[BFS14]{BuhrmanFS14}
Harry Buhrman, Serge Fehr, and Christian Schaffner.
\newblock On the parallel repetition of multi-player games: The no-signaling
  case.
\newblock In {\em Proceedings of the 9th Conference on the Theory of Quantum
  Computation, Communication and Cryptography}, Leibniz International
  Proceedings in Informatics, pages 24--35. Schloss Dagstuhl, 2014.

\bibitem[BOGKW88]{Ben-Or1988}
Michael Ben-Or, Shafi Goldwasser, Joe Kilian, and Avi Wigderson.
\newblock Multi-prover interactive proofs: How to remove intractability
  assumptions.
\newblock In {\em Proceedings of the {T}wentieth {A}nnual ACM {S}ymposium on
  Theory of {C}omputing}, pages 113--131. ACM, 1988.

\bibitem[CHSH69]{Clauser1969}
John Clauser, Michael Horne, Abner Shimony, and Richard Holt.
\newblock Proposed experiment to test local hidden-variable theories.
\newblock {\em Physical {R}eview {L}etters}, 23(15):880, 1969.

\bibitem[CHTW04]{Cleve2004}
Richard Cleve, Peter Hoyer, Benjamin Toner, and John Watrous.
\newblock Consequences and limits of nonlocal strategies.
\newblock In {\em Computational Complexity, 2004. Proceedings. 19th IEEE Annual
  Conference on}, pages 236--249. IEEE, 2004.

\bibitem[CKW00]{Coffman2000}
Valerie Coffman, Joydip Kundu, and William Wootters.
\newblock Distributed entanglement.
\newblock {\em Physical Review A}, 61(5):052306, 2000.

\bibitem[CM14]{CM14}
Richard Cleve and Rajat Mittal.
\newblock Characterization of binary constraint system games.
\newblock In Javier Esparza, Pierre Fraigniaud, Thore Husfeldt, and Elias
  Koutsoupias, editors, {\em Automata, Languages, and Programming}, volume 8572
  of {\em Lecture Notes in Computer Science}, pages 320--331. Springer Berlin
  Heidelberg, 2014.

\bibitem[CSA{\etalchar{+}}15]{Cavalcanti2015}
Daniel Cavalcanti, Paul Skrzypczyk, Gregory Aguilar, Ranieri Nery, Paulo~Souto
  Ribeiro, and Stephen Walborn.
\newblock Detection of entanglement in asymmetric quantum networks and
  multipartite quantum steering.
\newblock {\em Nature Communications}, 6, 2015.

\bibitem[CSUU08]{Cleve2008}
Richard Cleve, William Slofstra, Falk Unger, and Sarvagya Upadhyay.
\newblock Perfect parallel repetition theorem for quantum {XOR} proof systems.
\newblock {\em Computational Complexity}, 17(2):282--299, 2008.

\bibitem[DLTW08]{Doherty2008}
Andrew Doherty, Yeong-Cherng Liang, Ben Toner, and Stephanie Wehner.
\newblock The quantum moment problem and bounds on entangled multi-prover
  games.
\newblock In {\em Computational Complexity, 2008. CCC'08. 23rd Annual IEEE
  Conference on}, pages 199--210. IEEE, 2008.

\bibitem[DSV13]{Dinur2013}
Irit Dinur, David Steurer, and Thomas Vidick.
\newblock A parallel repetition theorem for entangled projection games.
\newblock {\em Computational Complexity}, 24:201--254, 2013.

\bibitem[Fei91]{Feige1991}
Uriel Feige.
\newblock On the success probability of the two provers in one-round proof
  systems.
\newblock In {\em Structure in Complexity Theory Conference, 1991., Proceedings
  of the Sixth Annual}, pages 116--123. IEEE, 1991.

\bibitem[FK94]{Feige1994}
Uri Feige and Joe Kilian.
\newblock Two prover protocols: {L}ow error at affordable rates.
\newblock In {\em Proceedings of the Twenty-sixth Annual ACM Symposium on
  Theory of Computing}, pages 172--183. ACM, 1994.

\bibitem[For89]{Fortnow1989}
Lance Fortnow.
\newblock {\em Complexity-theoretic aspects of interactive proof systems}.
\newblock PhD thesis, Massachusetts Institute of Technology, 1989.

\bibitem[Fri12]{Fritz2012}
Tobias Fritz.
\newblock Tsirelson's problem and {K}irchberg's conjecture.
\newblock {\em Reviews in Mathematical Physics}, 24(05):1250012, 2012.

\bibitem[GBY08]{Grant2008a}
Michael Grant, Stephen Boyd, and Yinyu Ye.
\newblock {CVX}: {MATLAB} software for disciplined convex programming, 2008.

\bibitem[JP11]{Junge2011a}
Marius Junge and Carlos Palazuelos.
\newblock Large violation of {B}ell inequalities with low entanglement.
\newblock {\em Communications in Mathematical Physics}, 306(3):695--746, 2011.

\bibitem[JR15]{Johnston2015b}
Nathaniel Johnston and Vincent Russo.
\newblock Supplementary software for implementing the examples for the extended
  {NPA} hierarchy of semidefinite programs.
\newblock
  \href{https://github.org/vprusso/monogamy-of-entanglement-games}{https://github.org/vprusso/monogamy-of-entanglement-games},
  2015.

\bibitem[KKM{\etalchar{+}}11]{Kempe2011}
Julia Kempe, Hirotada Kobayashi, Keiji Matsumoto, Ben Toner, and Thomas Vidick.
\newblock Entangled games are hard to approximate.
\newblock {\em SIAM Journal on Computing}, 40(3):848--877, 2011.

\bibitem[KR10]{Kempe2010}
Julia Kempe and Oded Regev.
\newblock No strong parallel repetition with entangled and non-signaling
  provers.
\newblock In {\em Computational Complexity (CCC), 2010 IEEE 25th Annual
  Conference on}, pages 7--15. IEEE, 2010.

\bibitem[KRT10]{Kempe2010a}
Julia Kempe, Oded Regev, and Ben Toner.
\newblock Unique games with entangled provers are easy.
\newblock {\em SIAM Journal on Computing}, 39(7):3207--3229, 2010.

\bibitem[KW04]{Koashi2004}
Masato Koashi and Andreas Winter.
\newblock Monogamy of quantum entanglement and other correlations.
\newblock {\em Physical Review A}, 69(2):022309, 2004.

\bibitem[NPA07]{Navascues2007}
Miguel Navascu\'{e}s, Stefano Pironio, and Antonio Ac{\'\i}n.
\newblock Bounding the set of quantum correlations.
\newblock {\em Physical Review Letters}, 98:010401, 2007.

\bibitem[NPA08]{Navascues2008}
Miguel Navascu{\'e}s, Stefano Pironio, and Antonio Ac{\'\i}n.
\newblock A convergent hierarchy of semidefinite programs characterizing the
  set of quantum correlations.
\newblock {\em New Journal of Physics}, 10(7):073013, 2008.

\bibitem[OV06]{Osborne2006}
Tobias Osborne and Frank Verstraete.
\newblock General monogamy inequality for bipartite qubit entanglement.
\newblock {\em Physical Review Letters}, 96(22):220503, 2006.

\bibitem[Raz98]{Raz1998}
Ran Raz.
\newblock A parallel repetition theorem.
\newblock {\em SIAM Journal on Computing}, 27(3):763--803, 1998.

\bibitem[RV13]{Regev2013}
Oded Regev and Thomas Vidick.
\newblock Quantum {XOR} games.
\newblock In {\em Computational Complexity (CCC), 2013 IEEE Conference on},
  pages 144--155. IEEE, 2013.

\bibitem[SBC{\etalchar{+}}15]{Sainz2015}
Ana~Belen Sainz, Nicolas Brunner, Daniel Cavalcanti, Paul Skrzypczyk, and
  Tam{\'a}s V{\'e}rtesi.
\newblock Post-quantum steering.
\newblock {\em arXiv preprint arXiv:1505.01430}, 2015.

\bibitem[Ter01]{Terhal2001}
Barbara Terhal.
\newblock A family of indecomposable positive linear maps based on entangled
  quantum states.
\newblock {\em Linear Algebra and its Applications}, 323(1):61--73, 2001.

\bibitem[Ter04]{Terhal2004}
Barbara Terhal.
\newblock Is entanglement monogamous?
\newblock {\em IBM Journal of Research and Development}, 48(1):71--78, 2004.

\bibitem[TFKW13]{Tomamichel2013}
Marco Tomamichel, Serge Fehr, Jkedrzej Kaniewski, and Stephanie Wehner.
\newblock A monogamy-of-entanglement game with applications to
  device-independent quantum cryptography.
\newblock {\em New Journal of Physics}, 15(10):103002, 2013.

\bibitem[Vid13]{Vidick2013}
Thomas Vidick.
\newblock Three-player entangled {XOR} games are {NP}-hard to approximate.
\newblock In {\em Foundations of Computer Science (FOCS), 2013 IEEE 54th Annual
  Symposium on}, pages 766--775. IEEE, 2013.

\end{thebibliography}

\end{document}